\newtheorem{Remark}{\it Remark}[section]
\newtheorem{Proposition}{\it Proposition}[section]
\newtheorem{Lemma}{\it Lemma}[section]
\newcounter{Mytempeqncnt1}
\begin{document}
\title{Throughput Maximization for the Gaussian Relay Channel with Energy Harvesting Constraints}


%
%

\author{\IEEEauthorblockN{Chuan Huang,~\IEEEmembership{Student Member,~IEEE},~Rui Zhang,~\IEEEmembership{Member,~IEEE},~Shuguang
Cui},~\IEEEmembership{Member,~IEEE}

\thanks{Manuscript received in August 29, 2011; revised in February 28, 2012. Parts of this paper have been presented in IEEE ICASSP'2012 and IEEE ICC'2012. This work is supported in part by DoD-DTRA under Grant HDTRA1-08-1-0010.}

\thanks{Chuan Huang and Shuguang Cui are with the Department of Electrical
and Computer Engineering, Texas A\&M University, College Station,
TX, 77843. Emails: \{huangch, cui\}@tamu.edu.

Rui Zhang is with the Department of Electrical and Computer Engineering, National University of Singapore, Singapore 117576. Email: elezhang@nus.edu.sg.
}}




\maketitle
\begin{abstract}
This paper considers the use of energy harvesters, instead of conventional time-invariant energy sources, in wireless cooperative communication. For the purpose of exposition, we study the classic three-node Gaussian relay channel with decode-and-forward (DF) relaying, in which the source and relay nodes transmit with power drawn from energy-harvesting (EH) sources. Assuming a deterministic EH model under which the energy arrival time and the harvested amount are known prior to transmission, the throughput maximization problem over a finite horizon of $N$ transmission blocks is investigated. In particular, two types of data traffic with different delay constraints are considered: delay-constrained (DC) traffic (for which only one-block decoding delay is allowed at the destination) and no-delay-constrained (NDC) traffic (for which arbitrary decoding delay up to $N$ blocks is allowed). For the DC case, we show that the joint source and relay power allocation over time is necessary to achieve the maximum throughput, and propose an efficient algorithm to compute the optimal power profiles. For the NDC case, although the throughput maximization problem is non-convex, we prove the optimality of a separation principle for the source and relay power allocation problems, based upon which a two-stage power allocation algorithm is developed to obtain the optimal source and relay power profiles separately. Furthermore, we
compare the DC and NDC cases, and obtain the sufficient and
necessary conditions under which the NDC case performs strictly
better than the DC case. It is shown that NDC transmission is able
to exploit a new form of diversity arising from the independent
source and relay energy availability over time in cooperative
communication, termed ``energy diversity'', even with time-invariant
channels.
\end{abstract}

\begin{IEEEkeywords}
Energy harvesting, relay channel, decode and forward (DF), cooperative communication, energy diversity.
\end{IEEEkeywords}


\section{Introduction}

\IEEEPARstart{I}{n} conventional energy-constrained wireless communication systems such as wireless sensor networks (WSNs), sensors are equipped with fixed energy supply devices, e.g., batteries, which have limited operation time. When thousands of sensors are deployed in a hostile or toxic environment, recharging or replacing batteries becomes inconvenient and even impossible. Hence, harvesting energy from the environment is a much easier and safer way to provide almost unlimited energy supply for WSNs. However, compared with conventional time-invariant energy sources, energy replenished by harvesters is intermittent over time, e.g., energy fluctuation caused by time-dependent solar and wind patterns. As a result, wireless devices powered by renewable energy are subject to the energy-harvesting (EH) constraints over time, i.e., the total energy consumed up to any time must be less than the energy harvested by that time.
\IEEEpubidadjcol

Wireless communication with EH nodes has recently drawn significant research attention. In \cite{niyato,kansal}, the authors investigated the power management strategies for WSNs with EH nodes, for which random EH models were assumed. For the point-to-point communication powered by EH sources, the power management problem was studied in \cite{yang,yener} with the deterministic EH model, and in \cite{sharma,rui,ozel} with the random EH model. In particular, with the deterministic EH model, under which the energy amount and arrival time are assumed to be known prior to transmission, the authors in \cite{yang} studied the throughput maximization and transmission time minimization problems for the point-to-point additive white Gaussian noise (AWGN) channel. These results were generalized in \cite{yener} by further considering the finite energy storage limit. With the block Markov random EH model, the authors in \cite{rui} studied the throughput maximization problem over the fading AWGN channel and derived the optimal power allocation polices via dynamic programming and convex optimization techniques. In addition, with an independent and identically distributed (i.i.d.) EH model, the authors in \cite{ozel} studied the AWGN channel capacity under the EH constraints, and showed that even with the time-varying energy source, the same capacity can be achieved as that for the conventional case of constant power supply with the same total transmission energy consumed. It is worth noting that the authors in \cite{gunduz} considered the throughput maximization problem for the Gaussian two-hop relay channel without considering the direct link between the source and the destination, which is a special case for the relay channel model considered in our paper.

On the other hand, node cooperation has been known as an effective way to improve the
system capacity and diversity performance in wireless networks. With conventional time-invariant energy sources, the full-duplex relay channel has been thoroughly investigated in, e.g., \cite{cover,xie,kramer,cover2}, where various
achievable rates with decode-and-forward (DF) and compress-and-forward (CF) relaying schemes were obtained. For the half-duplex relay channel in which the relay needs to transmit and receive over orthogonal time slots or frequency bands, the achievable rates and power allocation polices have been examined in \cite{madsen}. In particular, the orthogonal half-duplex relay channel, in which the relay-destination link is orthogonal to the source-relay and source-destination links, has been studied in \cite{liang}.

In this paper, we study the half-duplex orthogonal Gaussian relay channel with EH source and relay nodes, as shown in Fig. \ref{system_model}. It is assumed that the relay transmits and receives over two different frequency bands, and thus the relay-destination link is orthogonal to both the source-relay and source-destination links. Here we consider the simple case with deterministic source and relay energy profiles, corresponding to practical scenarios where the EH level can be predicted with negligible errors, and leave the more general random cases for future study. Moreover, we focus on the DF relaying scheme for the purpose of exposition. We examine the throughput maximization problems over a finite horizon of $N$-block transmission\footnote{Note that in total $(N+1)$-block time is needed for each $N$-block transmission due to the one-block decoding delay at the relay.}. In each block, the source transmits a new message, which is received and decoded by the relay, and then forwarded to the destination in the subsequent one or more blocks. Our main objective is to study the structure of the optimal power and rate allocation at the source and the relay over different blocks to maximize the total throughput, under individual source and relay EH constraints. Specifically, we consider the following two types of data traffic with different decoding delay requirements at the destination:
\begin{enumerate}
  \item Delay-constrained (DC) traffic: The destination is required to decode the $i$-th source message, $i=1,\cdots,N$, immediately after it receives the signals from the source in the $i$-th block and from the relay in the $(i+1)$-th block. With such a requirement, the relay needs to forward the source message received in one block to the destination immediately in the next block;
  \item No-delay-constrained (NDC) traffic: The destination can tolerate arbitrary decoding delays provided that all source messages are decoded at the end of each $N$-block transmission. Consequently, the relay is allowed to store the decoded source message of the $i$-th block, and forward it to the destination in any of the remaining $(i+1)$-th, $\cdots$, $(N+1)$-th blocks.
\end{enumerate}
Clearly, the NDC case allows more flexible relay operations than the
DC case, and is thus expected to achieve a larger throughput in
general. It is worth noting that in practical EH systems, the source
and relay may have independent energy arrivals over time; as a
result, there exists a new form of diversity, termed ``energy
diversity'', to be exploited in cooperative communication with EH
nodes, even for the case of time-invariant channels. As will be
shown in this paper, the NDC transmission is able to exploit the
energy diversity to achieve a larger throughput than the DC
counterpart, thanks to the more relaxed delay requirement. The main
contributions of this paper are summarized as follows:
\begin{enumerate}
  \item For the DC case, we formulate a convex throughput maximization problem, and develop a joint source and relay optimal power allocation algorithm by exploiting the Karush-Kuhn-Tucker (KKT) optimality conditions and the monotonic property of the optimal power allocation that is non-decreasing over time. It is shown that the developed algorithm is a forward search in time over the two-dimensional (at both source and relay) harvested energy profiles, which can be considered as an extension of the one-dimensional (at source only) search algorithm in \cite{yang,rui} for the case of point-to-point AWGN channel.
  \item For the NDC case, although the throughput maximization problem is in general non-convex, a separation principle for the source and relay power allocation problem is proved to be optimal, upon which the original problem is decoupled into two convex subproblems that separately achieve the optimal source and relay power allocation. Such optimal source and relay power allocation is shown to be also non-decreasing over time, similar to the DC case. Moreover, we derive the necessary and sufficient conditions for the NDC case to strictly improve the system throughput over the DC case.
\end{enumerate}

The remainder of the paper is organized as follows. Section II presents the system model and summarizes the main assumptions in this paper. Section III formulates the throughput maximization problems for the DC and NDC cases, respectively. Sections IV and V develop algorithms to solve the formulated problems for the DC and NDC cases, respectively. Numerical results are presented in Section VI to validate the theoretical results. Finally, Section VII concludes the paper.

{\it Notation}: $\log(\cdot)$ and $\ln(\cdot)$ stand for the base-2 and natural logarithms, respectively; $\mathcal{C}(x) = \frac{1}{2}\log \left( 1+ x\right)$ denotes for the AWGN channel capacity; $\min \left\{x,y \right\}$ and $\max \left\{x,y \right\}$ denote the minimum and maximum between two real numbers $x$ and $y$, respectively; $\left(x\right)^+ = \max(0,x)$.

\section{System Model}

We consider the classic three-node relay channel, which consists of one source-destination pair and one relay, as shown in Fig. \ref{system_model}. We assume that the relay node operates in a half-duplex mode over two orthogonal frequency bands, while the source-relay and source-destination use the same band. For simplicity, we do not consider the bandwidth allocation problem for the relay, and assume that the source-relay and relay-destination links operate with equal bandwidth.

\begin{figure}[!t]
\centering
\includegraphics[width=.9 \linewidth]{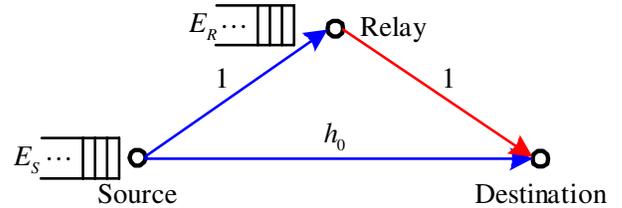}
\caption{Orthogonal relay channel with energy harvesting source and relay nodes.}
\label{system_model}
\end{figure}

\begin{figure*}[!b]
\vspace{4pt}
\hrulefill
\normalsize
\setcounter{Mytempeqncnt1}{\value{equation}}
\setcounter{equation}{12}
\begin{align}
(\text{P1}) & ~\max_{\left\{P_S(i) \right\},\left\{P_R(i+1)\right\}} ~~\frac{1}{2(N+1)} \sum_{i=1}^{N}  \min \left\{ \mathcal{C} \left( P_S(i) \right), \mathcal{C} \left(h_0 P_S(i) \right) + \mathcal{C} \left( P_R(i+1) \right) \right\}  \label{no_buffer_opt} \\
\text{s. t.} &~(\ref{source_relay_energy-const1}),~(\ref{source_relay_energy-const2}),~P_S(i) \geq 0,~P_R(i+1) \geq 0,i=1,\cdots,N. \label{no_buffer_opt2}
\end{align}
\setcounter{equation}{\value{Mytempeqncnt1}}
\end{figure*}

We consider the DF relaying scheme, which requires the relay to successfully decode the source message. Moreover, we adopt an $N$-block transmission protocol described as follows: During each of the $N$ source transmission blocks, say, the $i$-th block, $1 \leq i \leq N$, the source transmits a new message $w_i$ with power $P_S(i)$ and rate $R(i)$; upon receiving the signal from the source, the relay decodes $w_i$, and generates a binning index for $w_i$ based on the ``random binning'' technique \cite{liang} with rate $R_B(i+1)$. In the $(i+1)$-th block, the relay transmits a message $v_{i+1}$ with power $P_R(i+1)$ and rate $\mathcal{C} \left( P_R(i+1) \right)$. It is noted that for the DC case (defined in Section I), $v_{i+1}$ is the binning index of source message $w_i$ only; while for the NDC case (defined in Section I), $v_{i+1}$ may contain the information of binning indices for all source messages $w_k$'s, $k \leq i$. Moreover, we assume that each block has $B$ channel uses, where $B$ is assumed large enough such that the channel capacity results in \cite{liang,cover} are good approximations to the communication rates in practical systems.

In addition to the block transmission model, we assume that the harvested energy arrives at the beginning of each block with known amounts $E_S(i)$ in the $i$-th block and $E_R(i+1)$ in the $(i+1)$-th block, $i=1,2,\cdots,N$, at the source and the relay, respectively. In this paper, we assume that the battery capacity to store the harvested energy is infinite, and the consumed energy at the source or relay other than transmission energy is small and thus negligible. Thus, the amount of energy available for each block transmission is constrained by the following source and relay EH constraints:
\begin{align}
\sum_{i=1}^k P_S(i) &  \leq \frac{1}{ B} \sum_{i=1}^k E_S(i),~k=1,\cdots,N, \label{source_relay_energy-const1} \\
\sum_{i=1}^k P_R(i+1) & \leq \frac{1}{ B} \sum_{i=1}^k E_R(i+1),~k=1,\cdots,N. \label{source_relay_energy-const2}
\end{align}

For the $i$-th source and the $(i+1)$-th relay transmission blocks\footnote{Note that the $(i+1)$-th relay transmission block in fact corresponds to the $i$-th source message in the DC case.}, $i=1,\cdots,N$, the channel input-output relationships are given as:
\begin{align}
y_{sr} (i) & = \sqrt{h_{sr}} x_s (i) + n_r (i), \label{channel1}\\
y_{sd} (i) & = \sqrt{h_{sd}} x_s (i) + n_d (i) , \\
y_{rd} (i+1) & = \sqrt{h_{rd}} x_r (i+1) + w_d (i+1), \label{channel3}
\end{align}
where $x_s (i)$ and $x_r (i+1)$ are the transmitted signals in the $i$-th source and the $(i+1)$-th relay transmission blocks with power $P_S(i)$ and $P_R(i+1)$, respectively; $y_{sr} (i)$ is the received signal at the relay; $y_{sd} (i)$ and $y_{rd} (i+1)$ are the received signals at the destination from the source and the relay, respectively; $h_{sr}$, $h_{rd}$, and $h_{sd}$ are the constant channel power gains for the source-relay, relay-destination, and source-destination links, respectively;
$n_r (i)$, $n_d (i)$, and $w_d (i+1)$ are i.i.d. circularly symmetric complex Gaussian (CSCG) noises each with zero mean and unit variance.

With the above assumptions, the received signal-to-noise ratio (SNR) for the source-relay, source-destination, and relay-destination links are given as $\gamma_{sr}(i) = P_S(i) h_{sr}$, $\gamma_{sd}(i) = P_S(i) h_{sd}$, and $\gamma_{rd}(i+1) = P_R(i+1) h_{rd}$, respectively. Define new source/relay energy and power profiles as $\widetilde{E}_S(i) = E_S(i)h_{sr}$, $\widetilde{E}_R(i+1) = E_R(i+1)h_{rd}$, $\widetilde{P}_S(i) = P_S(i)h_{sr}$, and $\widetilde{P}_R(i+1) = P_R(i+1)h_{rd}$, and new channel gains as $\widetilde{h}_{sr} = \widetilde{h}_{rd}=1$ and $\widetilde{h}_{sd} = \frac{h_{sd}}{h_{sr}} \doteq h_0$. It is easy to check that with the new defined parameters, each link has the same SNR as before and the EH constraints given in (\ref{source_relay_energy-const1})-(\ref{source_relay_energy-const2}) are satisfied accordingly with the new power and energy profiles. As such, we could always determine the source and relay power profiles $\widetilde{P}_S(i)$'s and $\widetilde{P}_R(i+1)$'s first under the assumptions that $\widetilde{h}_{sr} = \widetilde{h}_{rd}=1$ and $\widetilde{h}_{sd} = \frac{h_{sd}}{h_{sr}} $, and then obtain $P_S(i)$'s and $P_R(i+1)$'s by scaling accordingly. Therefore, for notation simplicity and without loss of generality, we simplify the signal models in (\ref{channel1})-(\ref{channel3}) as
\begin{align}
y_{sr} (i) & =  x_s (i) + n_r (i), \\
y_{sd} (i) & = \sqrt{h_0} x_s (i) + n_d (i) , \\
y_{rd} (i+1) & =  x_r (i+1) + w_d (i+1),
\end{align}
by setting $h_{sr} = h_{rd}= 1$ and $h_{sd}=h_0$.

Moreover, it is assumed that $0 \leq h_0<1$, which means that the source-relay link is stronger than the source-destination link. Thus, the relay can always help with increasing the achievable rate from the source to the destination.

\begin{figure*}[!b]
\vspace{4pt}
\hrulefill
\normalsize
\setcounter{Mytempeqncnt1}{\value{equation}}
\setcounter{equation}{23}
\begin{align}
(\text{P2}) & \max_{\left\{P_S(i)\right\}, \left\{P_R(i+1)\right\} } \frac{1}{2(N+1)} \sum_{i=1}^N \mathcal{C} \left(h_0 P_S (i) \right) + \mathcal{C} \left( P_R (i+1) \right)  \label{inf_buffer_DF1} \\
\text{s. t.} & \sum_{i=1}^k \mathcal{C} \left( h_0 P_S (i) \right) +  \mathcal{C} \left( P_R (i+1) \right) \leq \sum_{i=1}^k \mathcal{C} \left( P_S (i) \right) ,k=1,\cdots,N, \text{and}~(\ref{no_buffer_opt2}). \label{inf_buffer_DF2}
\end{align}
\setcounter{equation}{\value{Mytempeqncnt1}}
\end{figure*}

\section{Problem Formulation}

\subsection{Delay-Constrained Case}

First consider the DC case. Since in the $i$-th source transmission block, the source transmits message $w_i$ with power $P_S(i)$ and rate $R(i)$, the relay decodes $w_i$ reliably only if
\begin{align} \label{DC_rate1}
R(i) \leq  \mathcal{C}(P_S(i)).
\end{align}
Then in the next block, the relay partitions $w_i$ into bins with an equivalent rate $ R_B(i+1)$ \cite{liang}, and transmits the binning index in message $v_{i+1}$ to the destination with power $P_R(i+1)$. At the destination, it first decodes $v_{i+1}$, if
\begin{align} \label{orth_DF_rate2}
R_B(i+1) \leq \mathcal{C} (P_R(i+1)),
\end{align}
and then decodes the original message $w_i$ if
\begin{align}
R(i) & \leq  \mathcal{C} (h_0 P_S(i)) +  R_B(i+1) \nonumber \\
 & \leq  \mathcal{C} (h_0 P_S(i)) + \mathcal{C} (P_R(i+1)), \label{DC_rate2}
\end{align}
where the second inequality is due to (\ref{orth_DF_rate2}). From (\ref{DC_rate1}) and (\ref{DC_rate2}), the achievable rate for the $i$-th source message is given by
\begin{align} \label{orth_DF_rate}
R(i)= \min \left\{ \mathcal{C} \left(P_S(i)  \right), \mathcal{C} \left( h_0 P_S(i) \right) + \mathcal{C} \left( P_R(i+1)  \right) \right\}.
\end{align}
Note that for the case of $h_0 = 0$, the coding scheme for the relay can be simplified to repetition coding, i.e., the source and the relay can use the same codebook.

Considering the $N$-block transmission, the average throughput in the unit of bits/sec/Hz (bps/Hz) is maximized by solving Problem (P1) in (\ref{no_buffer_opt})-(\ref{no_buffer_opt2}), where the factor $\frac{1}{2}$ in (\ref{no_buffer_opt}) is due to half-duplex relaying, and $\frac{1}{N+1}$ is due to the fact that each $N$-block transmission requires $(N+1)$-block duration. Next, some properties of the optimal power allocation solution for Problem (P1) are revealed.
\setcounter{equation}{14}

\begin{Proposition} \label{no_buffer_cons}
For $0 \leq h_0 <1$, there exist optimal power profiles $P_S^*(i)$'s and $P_R^*(i+1)$'s, which achieve the maximum throughput of Problem (P1) and satisfy the following inequalities:
\begin{align} \label{no_buffer_cons_equa}
\mathcal{C} \left(P_S^*(i) \right) \geq \mathcal{C} \left(h_0 P_S^*(i) \right) + \mathcal{C} \left( P_R^*(i+1) \right),~i=1,\cdots,N;
\end{align}
moreover, for the case of $h_0 = 0$, there exist optimal power profiles for Problem (P1) with
\begin{align} \label{no_buffer_power_nodire}
P_S^*(i) =  P_R^*(i+1),~i=1,\cdots,N.
\end{align}
\end{Proposition}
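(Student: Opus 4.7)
The plan is a direct ``trim the excess'' argument that exploits the $\min$ structure of the per-block rate. Starting from any optimal pair of profiles $(P_S^*(i), P_R^*(i+1))_{i=1}^N$ for (P1), I would modify it block-by-block so that (\ref{no_buffer_cons_equa}) holds everywhere while keeping both the throughput and feasibility unchanged.

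For the first claim, pick any block $i$ in which the inequality is violated, i.e.\ $\mathcal{C}(P_S^*(i)) < \mathcal{C}(h_0 P_S^*(i)) + \mathcal{C}(P_R^*(i+1))$. Here the $\min$ in (\ref{no_buffer_opt}) is taken by the first term $\mathcal{C}(P_S^*(i))$, so the relay is spending more power than necessary to achieve that rate. I would exploit that $p\mapsto \mathcal{C}(h_0 P_S^*(i))+\mathcal{C}(p)$ is continuous and strictly increasing, and, thanks to $h_0<1$, equals $\mathcal{C}(h_0 P_S^*(i))<\mathcal{C}(P_S^*(i))$ at $p=0$ while exceeding $\mathcal{C}(P_S^*(i))$ at $p=P_R^*(i+1)$. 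By the intermediate value theorem there is $\hat P_R(i+1)\in[0,P_R^*(i+1)]$ with
\begin{equation*}
\mathcal{C}(h_0 P_S^*(i)) + \mathcal{C}(\hat P_R(i+1)) = \mathcal{C}(P_S^*(i)).
\end{equation*}
Replacing $P_R^*(i+1)$ with $\hat P_R(i+1)$ only decreases the relay power, so the cumulative relay EH constraint (\ref{source_relay_energy-const2}) is still satisfied; the source constraint (\ref{source_relay_energy-const1}) is untouched; and the per-block rate is unchanged because the $\min$ still equals $\mathcal{C}(P_S^*(i))$. Carrying out this trim in every violating block yields a feasible profile with the same total throughput (hence optimal) in which (\ref{no_buffer_cons_equa}) holds as an equality (or already held as a strict inequality before the trim).

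For the second claim ($h_0=0$), the per-block rate collapses to $\min\{\mathcal{C}(P_S(i)),\mathcal{C}(P_R(i+1))\}=\mathcal{C}(\min\{P_S(i),P_R(i+1)\})$ by monotonicity of $\mathcal{C}$. Given an optimal profile, for any block with $P_S^*(i)\neq P_R^*(i+1)$ I would replace both values by $\min\{P_S^*(i),P_R^*(i+1)\}$. This keeps that block's contribution to the objective unchanged, only slackens whichever EH constraint is relevant, and therefore preserves both feasibility and optimality. Iterating over $i$ delivers (\ref{no_buffer_power_nodire}).

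There is no real obstacle: the argument is entirely ``local'' per block, the EH constraints are one-sided (only violated by using more energy), and the $\min$ structure makes the objective invariant under the trims. The one subtle point worth verifying carefully is the existence of $\hat P_R(i+1)$ in the first part, which is exactly where the hypothesis $h_0<1$ enters—if $h_0=1$ the relay could never raise the combined rate above $\mathcal{C}(P_S^*(i))$ by adding power, and the construction would be vacuous (as it also is in the $h_0=0$ reduction of the second part).
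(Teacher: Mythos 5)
Your proposal is correct and follows essentially the same route as the paper's own (very brief) proof: in each violating block, reduce the relay power until $\mathcal{C}(h_0 P_S^*(i))+\mathcal{C}(P_R^*(i+1))$ no longer exceeds $\mathcal{C}(P_S^*(i))$, which preserves the per-block rate and only slackens the one-sided EH constraints; and for $h_0=0$, reduce the larger of the two powers to match the smaller. Your version just spells out the intermediate-value step and the role of $h_0<1$ more explicitly than the paper does.
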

\begin{proof}
If (\ref{no_buffer_cons_equa}) is not satisfied for any $i$, we can always decrease $P_R^*(i+1)$ until it is satisfied, without reducing the achievable rate of the $i$-th source message. Similarly, if (\ref{no_buffer_power_nodire}) is not satisfied at any $i$, we can reduce $P_S^*(i)$ if $P_S^*(i)> P_R^*(i+1)$ or $P_R^*(i+1)$ if $P_R^*(i+1)> P_S^*(i)$ until the equality holds, without reducing the rate of the $i$-th source message. Thus, this proposition is proved.
\end{proof}

\begin{Remark} \label{DC_opt_not_unique}
From Proposition \ref{no_buffer_cons}, we infer that the optimal solution of Problem (P1) is not unique in general, e.g., when the energy harvested at the relay is excessively larger than that at the source. In the sequel, we are only interested in finding the optimal solutions for Problem (P1) satisfying (\ref{no_buffer_cons_equa}) and (\ref{no_buffer_power_nodire}) for the cases of $0<h_0<1$ and $h_0=0$, respectively, which achieve the minimum energy consumptions at the source and relay.
\end{Remark}

By (\ref{no_buffer_cons_equa}) and (\ref{orth_DF_rate2}), we obtain that
\begin{align*}
\mathcal{C} \left(P_S^*(i) \right) & \geq \mathcal{C} \left(h_0 P_S^*(i) \right) + \mathcal{C} \left(P_R^*(i+1) \right) \\
& \geq \mathcal{C} \left(h_0 P_S^*(i) \right) + R_B^*(i+1),~i=1,\cdots,N.
\end{align*}
Together with (\ref{orth_DF_rate2}), it follows that
\begin{align}
R_B(i+1)& = \min \left\{ \mathcal{C}(P_R(i+1)), \mathcal{C}(P_S(i)) -  \mathcal{C}(h_0 P_S(i)) \right\}, \nonumber \\
&~~~~~~~~~~~~~~~~~~~~~~~~~~~~~~~~i=1,\cdots,N. \label{DC_binning}
\end{align}
As such, if we can solve Problem (P1), by further applying (\ref{DC_binning}), we can obtain the optimal relay rate allocation for the DC case.

It is easy to verify that Problem (P1) is convex, and thus solvable by existing convex optimization techniques, e.g., the interior point method \cite{boyd}. However, such an approach does not reveal any insight for the optimal solution. Thus, in this paper, we develop an alternative method to solve Problem (P1) by exploiting its special structure, as will be shown later in Section IV.

\begin{figure*}[!b]
\vspace{4pt}
\hrulefill
\normalsize
\setcounter{Mytempeqncnt1}{\value{equation}}
\setcounter{equation}{25}
\begin{align}
(\text{P$1^*$})&~\max_{\left\{ P_S(i)\right\},\left\{P_R(i+1)\right\}} ~~\frac{1}{2(N+1)} \sum_{i=1}^{N}   \mathcal{C} \left(h_0 P_S(i) \right) + \mathcal{C} \left( P_R(i+1) \right) \label{new2111} \\
\text{s. t.}  &~~\mathcal{C} (h_0 P_S(i)) + \mathcal{C} ( P_R(i+1))  \leq \mathcal{C} ( P_S(i)),i=1,\cdots,N,\text{and}~(\ref{no_buffer_opt2}). \label{new2}
\end{align}
\begin{align}
&\mathcal{L}\left( P_S(i),P_R(i+1),\mu_k,\lambda_{k},\gamma_i,\eta_{i+1} \right) \nonumber = \frac{1}{2(N+1)} \sum_{i=1}^{N} \min \left\{ \mathcal{C} \left( P_S(i) \right), \mathcal{C} \left( h_0 P_S(i) \right) + \mathcal{C} \left( P_R(i+1) \right) \right\} \nonumber \\
& ~~~ - \sum_{k=1}^{N} \mu_k \left(\sum_{i=1}^{k} B  P_S(i) -  E_S(i)  \right)
- \sum_{k=1}^{N} \lambda_k \left( \sum_{i=1}^{k}  B  P_R(i+1) -  E_R(i+1)  \right) + \sum_{i=1}^{N} \gamma_i P_S(i) + \sum_{i=1}^{N} \eta_{i+1} P_R(i+1). \label{no_buffer_lag}
\end{align}
\begin{align}
\frac{\partial \mathcal{L}}{ \partial P_S(i)} & = \left\{
                                                    \begin{array}{ll}
                                                      \frac{1}{4(N+1)} \times \frac{1}{1+P_S(i)} , & P_R(i+1) \geq \frac{(1-h_0)P_S(i)}{1+ h_0 P_S(i)} \\
                                                      \frac{1}{4(N+1)}\times \frac{h_0}{1+h_0P_S(i)} , & \text{otherwise}
                                                    \end{array}
                                                  \right. - B \sum_{k=i}^{N} \mu_k + \gamma_i,  \label{deter_with_lag1}\\
\frac{\partial \mathcal{L}}{ \partial P_R(i+1)} & = \left\{
                                                    \begin{array}{ll}
                                                      0, & P_R(i+1) \geq \frac{(1-h_0)P_S(i)}{1+ h_0 P_S(i)} \\
                                                      \frac{1}{4(N+1)} \times \frac{1}{1+P_R(i+1)} , & \text{otherwise}
                                                    \end{array}
                                                  \right. - B \sum_{k=i}^{N} \lambda_k + \eta_{i+1}. \label{deter_with_lag2}
\end{align}
\setcounter{equation}{\value{Mytempeqncnt1}}
\end{figure*}

\subsection{No-Delay-Constrained Case}

For the NDC case, the relay operates the same as the DC case except that it is allowed to transmit the binning index for message $w_i$ in messages $v_{i+1},\cdots,v_{N+1}$ instead of $v_{i+1}$ only as in the DC case. At the destination, the binning indices for all source messages can be successfully decoded if
\begin{align}
 \sum_{i=1}^N R_B(i+1) & = \sum_{i=1}^N  \mathcal{C} ( P_R(i+1)) , \\
 \sum_{i=k}^N R_B(i+1) &  \leq \sum_{i=k}^N \mathcal{C} ( P_R(i+1)) ,~2 \leq k \leq N,
\end{align}
which is equivalent to
\begin{align}
\sum_{i=1}^k R_B(i+1) &\geq \sum_{i=1}^k  \mathcal{C} ( P_R(i+1)),~1 \leq k \leq N-1, \nonumber \\
\sum_{i=1}^N R_B(i+1) & = \sum_{i=1}^N  \mathcal{C} ( P_R(i+1)). \label{NDC_const2}
\end{align}
With the decoded binning index, the $i$-th source message can be decoded successfully at the destination if $ R(i ) \leq \mathcal{C} (h_0 P_S(i)) + R_B(i+1),~i=1,\cdots,N$. Combining this with (\ref{DC_rate1}), the achievable rate of the $i$-th source message is given as
\begin{align}
R(i) & = \min \left\{ \mathcal{C} (P_S(i)), \mathcal{C} (h_0 P_S(i)) + R_B(i+1)  \right\} \nonumber \\
&= \mathcal{C} (h_0 P_S(i)) + R_B(i+1),~i=1,\cdots,N, \label{NDC_rate1}
\end{align}
where the second equality is due to a similar argument as Proposition \ref{no_buffer_cons}. Note that for the special case of $h_0=0$, we have $R(i)=R_B(i+1)$ in (\ref{NDC_rate1}). In addition, (\ref{NDC_rate1}) implies that $\mathcal{C} (h_0 P_S(i)) + R_B(i+1) \leq \mathcal{C} (P_S(i)) $, $i=1,\cdots,N$, which leads to
\begin{align} \label{NDC_rate2}
   \sum_{i=1}^k \mathcal{C}(P_S(i)) - \mathcal{C} (h_0 P_S(i)) \geq  \sum_{i=1}^k R_B(i+1), ~k=1,\cdots,N.
\end{align}
From (\ref{NDC_const2}) and (\ref{NDC_rate2}), we obtain
\begin{align}
\sum_{i=1}^k \mathcal{C} \left( h_0 P_S (i) \right) & +  \mathcal{C} \left( P_R (i+1) \right) \nonumber \\
 &~~\leq \sum_{i=1}^k \mathcal{C} \left( P_S (i) \right),~k=1,\cdots,N. \label{NDC_rate3}
\end{align}

Using (\ref{NDC_rate1}) and (\ref{NDC_rate3}), the average throughput for the NDC case is maximized by solving Problem (P2), which is non-convex due to the first constraint in (\ref{inf_buffer_DF2}), and thus difficult to solve at a first glance. We will derive the optimal solution for this problem based on a separate source and relay power allocation strategy in Section V. We conclude this section by the following two propositions regarding Problems (P1) and (P2).

\begin{Proposition} \label{compare_two_schemes}
The maximum value of Problem (P2) is no smaller than that of Problem (P1).
\end{Proposition}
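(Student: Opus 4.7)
The plan is to show that any optimal solution of (P1) is feasible for (P2) and attains the same objective value there; monotonicity of the maximum then yields the claim. So I would argue by direct construction: take the optimum of (P1), carry it over to (P2), and verify both (i) feasibility and (ii) that the objective does not decrease.

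First I would invoke Proposition \ref{no_buffer_cons} to pick optimal power profiles $\{P_S^*(i)\}$ and $\{P_R^*(i+1)\}$ for (P1) satisfying the inequality $\mathcal{C}(P_S^*(i)) \geq \mathcal{C}(h_0 P_S^*(i)) + \mathcal{C}(P_R^*(i+1))$ for every $i=1,\dots,N$. Under this inequality, each per-block minimum in the objective of (P1) collapses to its second argument, so the optimal value of (P1) can be written as
\begin{equation*}
\frac{1}{2(N+1)} \sum_{i=1}^{N} \mathcal{C}\left(h_0 P_S^*(i)\right) + \mathcal{C}\left(P_R^*(i+1)\right).
\end{equation*}

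Next I would check feasibility of $\{P_S^*(i)\}, \{P_R^*(i+1)\}$ for (P2). The nonnegativity and EH constraints in (\ref{no_buffer_opt2}) are inherited directly from (P1). The only extra constraint in (P2) is the cumulative inequality in (\ref{inf_buffer_DF2}), and summing the per-block inequality from Proposition \ref{no_buffer_cons} over $i=1,\dots,k$ gives exactly $\sum_{i=1}^{k} \mathcal{C}(h_0 P_S^*(i)) + \mathcal{C}(P_R^*(i+1)) \leq \sum_{i=1}^{k} \mathcal{C}(P_S^*(i))$ for each $k$. Hence $\{P_S^*(i)\}, \{P_R^*(i+1)\}$ is a feasible point of (P2), and the objective of (P2) evaluated at this point coincides with the optimal value of (P1) displayed above. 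Since the optimal value of (P2) is at least the value at any feasible point, the proposition follows.

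I do not anticipate a real obstacle here; the entire argument rests on the structural observation already captured in Proposition \ref{no_buffer_cons}, which lets the DC per-block $\min$ be resolved in favor of the second term, so that (P1) can be reinterpreted as a constrained version of (P2). The only care needed is to use the \emph{particular} optimal profile guaranteed by Proposition \ref{no_buffer_cons} rather than an arbitrary one, because an arbitrary optimal profile of (P1) need not satisfy the cumulative constraint (\ref{inf_buffer_DF2}) required by (P2).
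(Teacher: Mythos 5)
Your proof is correct and follows essentially the same route as the paper: both hinge on Proposition \ref{no_buffer_cons} to resolve the per-block $\min$ in favor of its second argument and on the observation that the per-block inequality (\ref{no_buffer_cons_equa}) implies the cumulative constraint (\ref{inf_buffer_DF2}). The paper packages this as a feasible-set inclusion via the intermediate Problem (P$1^*$), whereas you carry a specific optimal point of (P1) into (P2); the content is the same.
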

\begin{proof}
By Proposition \ref{no_buffer_cons}, it follows that Problem (P1) has the same maximum value as Problem (P$1^*$) in (\ref{new2111})-(\ref{new2}). It is easy to see that any solution that satisfies (\ref{new2}) will also satisfy (\ref{inf_buffer_DF2}) for Problem (P2), but not vice versa. As such, the feasible set of Problem (P2) contains that of Problem (P$1^*$), which implies that the maximum value of Problem (P2) is no smaller than that of Problem (P$1^*$) or (P1). The proof is thus completed.
\end{proof}

\begin{Proposition} \label{no_buffer_source_energy}
For any optimal power profiles $P_S^*(i)$'s and $P_R^*(i+1)$'s for Problem (P1) or (P2), if $0 < h_0 < 1$, the constraint $B \sum_{i=1}^N P_S^*(i) \leq \sum_{i=1}^N E_S(i) $ must be satisfied with equality; if $h_0 = 0$, at least one of two constraints $B \sum_{i=1}^N P_S^*(i) \leq \sum_{i=1}^N E_S(i) $ and $B \sum_{i=1}^N P_R^*(i+1) \leq \sum_{i=1}^N E_R(i+1) $ must be satisfied with equality.
\end{Proposition}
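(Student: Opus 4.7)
The plan is a standard perturbation argument by contradiction: assume the EH constraint in question is slack at the claimed optimum, construct a small feasible perturbation, and show that it strictly increases the objective.

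For $0<h_0<1$, suppose $B\sum_{i=1}^N P_S^*(i)<\sum_{i=1}^N E_S(i)$, so constraint (\ref{source_relay_energy-const1}) at $k=N$ has strict slack. I would perturb only $P_S^*(N)\mapsto P_S^*(N)+\epsilon$ for small $\epsilon>0$: constraints (\ref{source_relay_energy-const1}) for $k<N$ are untouched, and the $k=N$ constraint remains valid by the slack assumption. For Problem (P1), only the $N$-th summand of (\ref{no_buffer_opt}) depends on $P_S(N)$, and both $\mathcal{C}(P_S(N))$ and $\mathcal{C}(h_0 P_S(N))+\mathcal{C}(P_R^*(N+1))$ are strictly increasing in $P_S(N)$ as $h_0>0$, so the minimum strictly increases. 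For Problem (P2) the objective contribution $\mathcal{C}(h_0 P_S(N))+\mathcal{C}(P_R^*(N+1))$ strictly increases, and I must verify that the coupling constraint (\ref{inf_buffer_DF2}) is preserved. Only its $k=N$ instance is affected; a short derivative comparison shows the LHS grows at rate proportional to $h_0/(1+h_0 P_S^*(N))$ while the RHS grows at rate $1/(1+P_S^*(N))$, and the required inequality reduces algebraically to $h_0\leq 1$, which holds strictly here. Hence the perturbation is feasible and strictly improves the objective, contradicting optimality.

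For $h_0=0$, suppose both total EH constraints are strict. I would perturb $(P_S^*(N),P_R^*(N+1))\mapsto(P_S^*(N)+\epsilon_S,P_R^*(N+1)+\epsilon_R)$ with $\epsilon_R>0$ and $\epsilon_S\ge 0$ chosen to keep (\ref{inf_buffer_DF2}) valid; concretely, it suffices that $\mathcal{C}(P_S^*(N)+\epsilon_S)-\mathcal{C}(P_S^*(N))\ge\mathcal{C}(P_R^*(N+1)+\epsilon_R)-\mathcal{C}(P_R^*(N+1))$, which is possible for small $\epsilon_R$ because of the source slack. For Problem (P1), Proposition \ref{no_buffer_cons} lets me take $P_S^*(N)=P_R^*(N+1)$ and set $\epsilon_S=\epsilon_R$, so the $N$-th rate $\min\{\mathcal{C}(P_S(N)),\mathcal{C}(P_R(N+1))\}$ strictly increases; for Problem (P2), the objective term $\mathcal{C}(P_R(N+1))$ strictly increases. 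In either problem we contradict optimality.

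The main obstacle is the non-convex coupling constraint (\ref{inf_buffer_DF2}) in Problem (P2), because a naive one-sided perturbation of $P_R(N+1)$ alone could violate it. The fix is to pair any relay perturbation with a source perturbation large enough to keep the RHS of (\ref{inf_buffer_DF2}) at $k=N$ growing at least as fast as the LHS; the strict slack of the source budget (when $0<h_0<1$) or of both budgets (when $h_0=0$) guarantees that such a coupled move is within the EH budgets, so the strict improvement goes through.
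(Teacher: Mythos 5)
Your proposal is correct and follows the same route as the paper: assume the relevant total energy constraint is slack and exhibit a feasible perturbation of the last-block power(s) that strictly increases the objective, contradicting optimality. The paper states this in two sentences without detail; your version additionally verifies that the perturbation respects the non-convex coupling constraint (\ref{inf_buffer_DF2}) in Problem (P2) via the inequality $\mathcal{C}(x+\delta)-\mathcal{C}(x)\geq\mathcal{C}(h_0(x+\delta))-\mathcal{C}(h_0 x)$ for $h_0\leq 1$, which is a worthwhile detail the paper glosses over.
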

\begin{proof}
Supposing that the above equalities do not hold, we can increase the source or relay transmission power without violating the energy constraints to further improve the throughput, which contradicts the fact that $P_S^*(i)$'s and $P_R^*(i+1)$'s are optimal solutions of Problem (P1) or (P2). Thus, this proposition is proved.
\end{proof}

\section{Optimal Solution for the DC Case}

In this section, we solve Problem (P1) for the DC case. We first present a monotonic property for the optimal power allocation in Problem (P1), upon which we then develop an efficient algorithm to solve this problem.

\subsection{Monotonic Power Allocation}

Since the optimal solution of Problem (P1) may not be unique (cf. Remark \ref{DC_opt_not_unique}), we are interested in finding one optimal solution for this problem that leads to the minimum power consumption at the source and relay. For such an optimal solution, we have the following monotonic property.

\begin{Proposition} \label{no_buffer_increasing_profile}
The optimal solution of Problem (P1), satisfying (\ref{no_buffer_cons_equa}) for the case of $0<h_0<1$ or (\ref{no_buffer_power_nodire}) for the case of $h_0=0$, is non-decreasing over $i$, i.e., $P_S^*(i) \leq P_S^*(i+1)$, $P_R^*(i+1) \leq P_R^*(i+2)$, $i=1,2,\cdots,N-1$.
\end{Proposition}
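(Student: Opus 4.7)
The plan is to extract monotonicity from the Karush--Kuhn--Tucker (KKT) optimality conditions of the convex Problem (P1), exploiting the cumulative-sum form of the EH constraints (\ref{source_relay_energy-const1})--(\ref{source_relay_energy-const2}). Since (P1) is convex, KKT is necessary and sufficient, and I will work throughout with an optimal solution satisfying (\ref{no_buffer_cons_equa}) (respectively (\ref{no_buffer_power_nodire}) when $h_0=0$) whose existence is guaranteed by Proposition \ref{no_buffer_cons}. Under (\ref{no_buffer_cons_equa}), only the ``otherwise'' branch of the partial derivatives (\ref{deter_with_lag1})--(\ref{deter_with_lag2}) is active, which makes the stationarity conditions unambiguous.

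First consider $0<h_0<1$. At every index $i$ with $P_S^*(i)>0$, complementary slackness forces $\gamma_i=0$, so (\ref{deter_with_lag1}) reduces to
\begin{align*}
\frac{h_0}{4(N+1)\bigl(1+h_0 P_S^*(i)\bigr)} \;=\; B\sum_{k=i}^{N}\mu_k .
\end{align*}
Likewise, at every index $i$ with $P_R^*(i+1)>0$, (\ref{deter_with_lag2}) becomes
\begin{align*}
\frac{1}{4(N+1)\bigl(1+P_R^*(i+1)\bigr)} \;=\; B\sum_{k=i}^{N}\lambda_k .
\end{align*}
Because $\mu_k,\lambda_k\geq 0$, the right-hand sides are non-increasing in $i$; since $x\mapsto h_0/(1+h_0 x)$ and $x\mapsto 1/(1+x)$ are strictly decreasing on $[0,\infty)$, the two displayed equalities force $P_S^*(i)\leq P_S^*(i+1)$ and $P_R^*(i+1)\leq P_R^*(i+2)$ whenever the consecutive powers are both positive.

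It remains to rule out a drop from a positive value to zero. Suppose $P_S^*(i)>0$ while $P_S^*(i+1)=0$. Evaluating the stationarity at $i+1$ (with $\gamma_{i+1}\geq 0$) yields $B\sum_{k=i+1}^{N}\mu_k\geq \frac{h_0}{4(N+1)}$, while evaluating it at $i$ gives $B\sum_{k=i}^{N}\mu_k=\frac{h_0}{4(N+1)\bigl(1+h_0 P_S^*(i)\bigr)}<\frac{h_0}{4(N+1)}$. Combined with $\mu_i\geq 0$ and the identity $\sum_{k=i}^{N}\mu_k=\mu_i+\sum_{k=i+1}^{N}\mu_k$, this is a contradiction; an identical argument excludes a positive-to-zero drop for $P_R^*$. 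For $h_0=0$, the $P_S$-derivative in (\ref{deter_with_lag1}) degenerates, so I instead apply the $P_R$ argument above to obtain monotonicity of $P_R^*(i+1)$ and transfer it to $P_S^*(i)$ via the identity $P_S^*(i)=P_R^*(i+1)$ from (\ref{no_buffer_power_nodire}). The main obstacle I anticipate is precisely this careful treatment of zero-power indices, which must be handled through the sign constraints on $\gamma_i$ and $\eta_{i+1}$ rather than through the clean interior stationarity equations.
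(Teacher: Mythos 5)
Your strategy (convexity of (P1) plus the cumulative structure of the dual variables) is reasonable in spirit, but there is a genuine gap at its very first step: the claim that under (\ref{no_buffer_cons_equa}) only the ``otherwise'' branch of (\ref{deter_with_lag1})--(\ref{deter_with_lag2}) is active. Condition (\ref{no_buffer_cons_equa}) is the inequality $P_R^*(i+1)\le\frac{(1-h_0)P_S^*(i)}{1+h_0P_S^*(i)}$, and the solutions singled out in Remark \ref{DC_opt_not_unique} satisfy it with \emph{equality} on a non-exceptional set of blocks---this is exactly Case I in (\ref{no_buffer_water1}) and Scenario I of Proposition \ref{no_buffer_threecase}, which occurs whenever the relay is energy-rich relative to the source. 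At such a block the two terms inside the $\min$ coincide, the objective is not differentiable, and the correct stationarity condition is a subdifferential inclusion (the dual sum must lie in the interval between the two branch derivatives), not your displayed equalities. Your relay equation already fails there: take $N=1$, $h_0=1/2$, $E_S(1)/B=1$ and $E_R(2)/B$ very large; the optimum satisfying (\ref{no_buffer_cons_equa}) is $P_S^*(1)=1$, $P_R^*(2)=1/3$, the relay energy constraint is slack so $\lambda_1=0$ and $\eta_2=0$, yet your equation asserts $\tfrac{1}{8(1+1/3)}=0$. The same degeneracy afflicts the entire $h_0=0$ case, where (\ref{no_buffer_power_nodire}) places every block at the kink, so the $P_R$ stationarity you propose to ``transfer'' to $P_S$ is itself an equation that need not hold.

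Even after replacing your equalities with the correct subdifferential inclusions, the argument does not close. At a kink the admissible value of $B\sum_{k=i}^N\mu_k$ is anywhere in $\bigl[\tfrac{h_0}{4(N+1)(1+h_0P_S^*(i))},\,\tfrac{1}{4(N+1)(1+P_S^*(i))}\bigr]$, and a short computation shows that a Case I block $i$ followed by a Case II block $i+1$ is consistent with non-increasing dual sums whenever $P_S^*(i)-P_S^*(i+1)\le\tfrac{1}{h_0}-1$; so monotonicity of the duals alone does not force monotonicity of the powers, and the paper's later assertion that a transition from (\ref{no_buffer_water1}) to (\ref{no_buffer_water2}) is impossible is itself \emph{deduced from} Proposition \ref{no_buffer_increasing_profile}, so it cannot be invoked here without circularity. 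The paper avoids all of this in Appendix \ref{proof_no_increase_profile} by a purely primal exchange argument: each of the three orderings violating monotonicity is destroyed by an explicit feasible perturbation (averaging two consecutive powers, or shifting a small amount $\delta$ of source power from block $i$ to block $i+1$) that strictly increases the sum rate, using only concavity, the monotonicity of $x\mapsto\log\frac{1+x}{1+h_0x}$, and Proposition \ref{no_buffer_cons}. To rescue your dual route you would need both the subdifferential form of the KKT conditions and a separate argument excluding the Case I $\to$ Case II overlap; the perturbation argument is the more economical path.
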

\begin{proof}
See Appendix \ref{proof_no_increase_profile}.
\end{proof}
With the above monotonic properties, we next address the optimal solution of Problem (P1) for the cases with and without the direct source-destination link, respectively.
\setcounter{equation}{29}

\subsection{The Case With Direct Link}

For the case of $0 < h_0 < 1$, we consider the Lagrangian of Problem (P1), which is given by (\ref{no_buffer_lag}), where $\mu_k,~\lambda_k,~\gamma_i$, and $\eta_{i+1}$ are the non-negative Lagrangian multipliers. By taking the derivative over $P_S(i)$ and $P_R(i+1)$, (\ref{deter_with_lag1})-(\ref{deter_with_lag2}) are obtained. Then, by letting $\frac{\partial \mathcal{L}}{ \partial P_S(i)}=0$ and $\frac{\partial \mathcal{L}}{ \partial P_R(i+1)}=0$, the optimal solutions for Problem (P1) are obtained as follows:
\begin{enumerate}
  \item Case I: if $ P_R^*(i+1) \geq \frac{(1-h_0)P_S^*(i)}{1+ h_0 P_S^*(i)} $,
\begin{align} \label{no_buffer_water1}
\left\{
  \begin{array}{l}
    P_S^*(i)  = \left( \frac{1}{4(N+1)\sum_{k=i}^{N} \mu_k} -1  \right)^+  \\
    P_R^*(i+1)  = \frac{(1-h_0)P_S^*(i)}{1+ h_0 P_S^*(i)}
  \end{array}
\right.;
\end{align}

  \item Case II: if $ P_R^*(i+1) \leq \frac{(1-h_0)P_S^*(i)}{1+ h_0 P_S^*(i)} $,
\begin{align} \label{no_buffer_water2}
\left\{
  \begin{array}{l}
    P_S^*(i)  = \left( \frac{1}{4(N+1)\sum_{k=i}^{N} \mu_k} - \frac{1}{h_0}  \right)^+  \\
    P_R^*(i+1)  = \left( \frac{1}{4(N+1)\sum_{k=i}^{N} \lambda_k} -1  \right)^+
  \end{array}
\right..
\end{align}
\end{enumerate}

\begin{Remark}
From the above expressions, it is observed that the source and relay power profiles need to be jointly optimized, since the achievable rate for the $i$-th source message is limited by the available source power in Case I, but by the available relay power in Case II.
\end{Remark}

From the KKT optimality conditions of Problem (P1), it follows that $\lambda_k$ and $\mu_k$ are strictly positive only when their corresponding relay and source energy constraints are satisfied with equality. Thus, it follows that the optimal source power can change the value from one block to another only when the harvested source energy is exhausted at the current block or there is a transition between the two values given by (\ref{no_buffer_water1}) and (\ref{no_buffer_water2}). The latter case is due to the fact that $h_0 <1$, and thus the source power values given by (\ref{no_buffer_water1}) and (\ref{no_buffer_water2}) are different even when the source energy constraint is not active. By further considering the result of Proposition \ref{no_buffer_increasing_profile}, we know that changing source power values from (\ref{no_buffer_water1}) to (\ref{no_buffer_water2}) is not possible, and only transitions from (\ref{no_buffer_water2}) to (\ref{no_buffer_water1}) can occur. Thus, we have the following proposition for the optimal source power allocation.

\begin{Proposition} \label{no_buffer_threecase}
Consider the optimal source power $P_S^*(i)$'s for Problem (P1), which satisfy Proposition \ref{no_buffer_increasing_profile}. For any two successive source energy exhausting blocks, $k_i$ and $k_j$ with $k_i < k_j$, i.e., the source energy constraints $ \sum_{i=1}^{k_i} P_S^*(i)  \leq \frac{1}{ B} \sum_{i=1}^{k_i} E_S(i)$ and $ \sum_{i=1}^{k_j} P_S^*(i)  \leq \frac{1}{ B} \sum_{i=1}^{k_j} E_S(i)$ are active, while the other constraints $ \sum_{i=1}^{j} P_S^*(i)  \leq \frac{1}{ B} \sum_{i=1}^{j} E_S(i)$, $j=k_i +1,\cdots,k_j-1$, are all inactive, the optimal source power values from the $\left( k_i+1 \right)$-th to the $k_j$-th blocks can only be one of the following three cases:
\begin{enumerate}
  \item Scenario I: $P_S^*(i)$, $i=k_i+1,\cdots,k_j$, are identical and given by (\ref{no_buffer_water1});
  \item Scenario II: $P_S^*(i)$, $i=k_i+1,\cdots,k_j$, are identical and given by (\ref{no_buffer_water2});
  \item Scenario III: There exists $k_0$ with $k_i <k_0<k_j$ such that $P_S^*(i)=P_0$, $i=k_i+1,...,k_0$, and $P_S^*(i)=P_0 + \frac{1}{h_0}-1$, $i=k_0+1,...k_j$, where $P_0$ is jointly determined by (\ref{no_buffer_water1}) and (\ref{no_buffer_water2}). Define the $k_0$-th block as the source power transition block for this scenario.
\end{enumerate}
\end{Proposition}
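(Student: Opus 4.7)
The plan is to combine the KKT-derived candidate formulas (\ref{no_buffer_water1}) and (\ref{no_buffer_water2}) with the monotonicity established in Proposition \ref{no_buffer_increasing_profile}, exploiting complementary slackness on the inactive source energy constraints strictly between $k_i$ and $k_j$.

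First, I would invoke complementary slackness: since $\sum_{i=1}^{j} P_S^*(i) < \frac{1}{B}\sum_{i=1}^{j} E_S(i)$ for $j = k_i+1,\ldots,k_j-1$, the associated Lagrange multipliers vanish, i.e.\ $\mu_j = 0$. Consequently, for every $i$ with $k_i+1 \leq i \leq k_j$, the tail sum
\[
S \doteq \sum_{k=i}^N \mu_k \;=\; \mu_{k_j} + \sum_{k>k_j} \mu_k
\]
is a single constant, independent of $i$, on this whole window. Plugging this common $S$ into (\ref{no_buffer_water1}) and (\ref{no_buffer_water2}) leaves only two possible values for $P_S^*(i)$ in the window, namely
\[
P_{\mathrm{I}} \;=\; \Big(\tfrac{1}{4(N+1)S} - 1\Big)^{+}, \qquad P_{\mathrm{II}} \;=\; \Big(\tfrac{1}{4(N+1)S} - \tfrac{1}{h_0}\Big)^{+},
\]
which, whenever both are strictly positive, satisfy $P_{\mathrm{I}} - P_{\mathrm{II}} = \tfrac{1}{h_0} - 1 > 0$ because $0<h_0<1$.

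Next, I would use the non-decreasing property $P_S^*(i) \leq P_S^*(i+1)$ of Proposition \ref{no_buffer_increasing_profile} to exclude any switch from Case I (the larger value $P_{\mathrm{I}}$) to Case II (the smaller value $P_{\mathrm{II}}$) inside $[k_i+1,k_j]$. The only admissible label patterns are therefore: Case I throughout the window (giving Scenario I, with common value $P_{\mathrm{I}}$), Case II throughout the window (giving Scenario II, with common value $P_{\mathrm{II}}$), or Case II on $[k_i+1,k_0]$ followed by Case I on $[k_0+1,k_j]$ for a unique transition index $k_0$ (giving Scenario III, with $P_0 = P_{\mathrm{II}}$ and $P_0 + \tfrac{1}{h_0}-1 = P_{\mathrm{I}}$). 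The uniqueness of $k_0$ is forced by the monotonicity together with the fact that $P_{\mathrm{I}}$ and $P_{\mathrm{II}}$ are the only two admissible levels in this window.

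The main obstacle I anticipate is handling the boundary regime $P_R^*(i+1) = \frac{(1-h_0)P_S^*(i)}{1+h_0 P_S^*(i)}$, where the two KKT branches (\ref{no_buffer_water1}) and (\ref{no_buffer_water2}) both apply simultaneously, as well as the degenerate case in which the $(\cdot)^{+}$ clipping in $P_{\mathrm{II}}$ is active. For the boundary regime, I would check that at the transition block $k_0$ both KKT branches are consistent, so that assigning this block to either label leaves the scenario classification intact; for the clipping, I would observe that if $P_{\mathrm{II}}=0$ while $P_{\mathrm{I}}>0$, the configuration still fits Scenario III with a degenerate $P_0 = 0$, while if both are zero the window falls trivially into Scenario II. In all cases, by Remark \ref{DC_opt_not_unique} any remaining ambiguity can be resolved without affecting optimality, matching the ``there exists an optimal solution'' spirit of the preceding propositions.
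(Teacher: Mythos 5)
Your proposal is correct and follows essentially the same route the paper takes (in the paragraph immediately preceding the proposition): complementary slackness on the inactive source-energy constraints makes the multiplier tail sum $\sum_{k=i}^N \mu_k$ constant across the window, so only the two branch values from (\ref{no_buffer_water1}) and (\ref{no_buffer_water2}) are admissible, and the non-decreasing property of Proposition \ref{no_buffer_increasing_profile} rules out a transition from the larger (Case I) value to the smaller (Case II) value, leaving exactly Scenarios I, II, and III. Your explicit treatment of the boundary regime and the $(\cdot)^+$ clipping is a small refinement beyond what the paper writes out, but it does not change the argument.
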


Based on Proposition \ref{no_buffer_threecase}, we know that if we could identify all blocks at which the source energy gets exhausted and furthermore all the scenarios corresponding to Proposition \ref{no_buffer_threecase}, the optimal source and relay power profiles for Problem (P1) can be obtained accordingly from (\ref{no_buffer_water1}) and (\ref{no_buffer_water2}). Thereby, we propose Algorithm \ref{no_buffer_algorithm} summarized in Table \ref{no_buffer_algorithm} to solve Problem (P1), whose optimality proof is given in Appendix \ref{proof_no_optimal}. The main procedure of this proposed algorithm is described as follows.

Starting from the first block, Algorithm \ref{no_buffer_algorithm} implements a forward searching for the optimal power allocation until the $N$-th block is reached. Suppose that the $(i-1)$-th block is an energy exhausting block for the source, and the optimal power allocation for the source and relay have been obtained up to this block, denoted by $P_S^*(n)$ and $P_R^*(n+1)$, respectively, $n=1,\cdots,i-1$. Then, starting from the $i$-th source and $(i+1)$-th relay transmission blocks, we first compute $i_{s,0}$ and $i_{r,0}$ (defined in (\ref{no_buffer_posi})), corresponding to the next possible source and relay energy exhausting blocks, respectively, and the source and relay power values $\widetilde{P}_S^{i,0}$ (from the $i$-th to $i_{s,0}$-th blocks) and $\widetilde{P}_R^{i+1,0}$ (from the $(i+1)$-th to $(i_{r,0}+1)$-th blocks), respectively, which are given as
\begin{align} \label{no_buffer_posi}
\left\{
  \begin{array}{l}
    i_{s,0} = \arg \min_{i \leq j \leq {N}} \left\{ \frac{ \sum_{k=i }^{j} E_S(k)}{ \left( j -i +1 \right)B }  \right\}, \\
    i_{r,0} = \arg \min_{i \leq j \leq {N}} \left\{ \frac{\widetilde{E}_R(i+1) +\sum_{k=i }^{j} E_R(k+1 )}{\left( j -i +1 \right)B  }  \right\},
  \end{array}
\right.
\end{align}
with $\widetilde{E}_R(i+1)$ denoting the relay energy left before the $(i+1)$-th relay transmission block, i.e., $\widetilde{E}_R(2) = 0$ and $\widetilde{E}_R(i+1) = \sum_{k=1}^{i-1}E_R(k+1) - B P_R^*(k+1)$, $i=2,\cdots,N$, and
\begin{align}
\widetilde{P}_S^{i,0}  & = \frac{  \sum_{k=i}^{i_{s,0}} E_S(k) }{\left( i_{s,0} -i +1 \right)B  }, \label{no_buffer_pswide1} \\
\widetilde{P}_R^{i+1,0} & = \frac{\widetilde{E}_R(i+1) + \sum_{k=i}^{i_{r,0}} E_R(k+1)}{ \left( i_{r,0} -i +1 \right) B  }. \label{no_buffer_pswide2}
\end{align}

Next, by comparing $\widetilde{P}_S^{i,0}$ and $\widetilde{P}_R^{i+1,0}$, we determine which scenario shown in Proposition \ref{no_buffer_threecase} should happen:

1) If $\widetilde{P}_R^{i+1,0}  \geq \frac{(1-h_0) \widetilde{P}_S^{i,0} }{1+ h_0 \widetilde{P}_S^{i,0} }$, it is claimed that Scenario I happens, and the optimal source and relay power values are given as
      \begin{align}
      \left\{
        \begin{array}{l}
          P_S^*(n)  = \widetilde{P}_S^{i,0} \\
          P_R^*(n+1)  = \frac{(1-h_0)\widetilde{P}_S^{i,0}}{1+ h_0 \widetilde{P}_S^{i,0}}
        \end{array}
      \right.
      ,~n=i,\cdots,i_{s,0}. \label{search_sceI_power}
      \end{align}
      Then, we set $i=i_{s,0}+1$, and continue the forward search.

2) If $ \widetilde{P}_R^{i+1,0}  < \frac{(1-h_0) \widetilde{P}_S^{i,0} }{1+ h_0 \widetilde{P}_S^{i,0} }$, Scenario II or III may happen. To determine whether Scenario III happens or not, we need to compute the index $k_0$, $1 < k_0 < N$, of the source power transition block defined in Scenario III of Proposition \ref{no_buffer_threecase}.

      Let the indices $i_{s,k}$ and $i_{r,p}$, $k,p \geq 1$, correspond to the source and relay energy exhausting blocks after $i_{s,k-1}$ and $i_{r,p-1}$, respectively, and $\widetilde{P}_S^{i,k}$ and $\widetilde{P}_R^{i+1,p}$ be the source and relay power values between the two successive energy exhausting blocks, i.e.,
        \begin{align}
        i_{s,k} & = \arg \min_{i_{s,k-1} < j \leq {N}} \left\{ \frac{  \sum_{k=i_{s,k-1}+1}^{j} E_S(k)}{\left( j -i_{s,k-1}   \right)B }  \right\}, \label{search_sceIII_power1} \\
        i_{r,p} & = \arg \min_{i_{r,p-1} < j \leq {N}} \left\{ \frac{  \sum_{k=i_{r,p-1}+1}^{j} E_R(k+1)}{\left( j -i_{r,p-1}  \right)B }  \right\}, \label{search_sceIII_power2} \\
        \widetilde{P}_S^{i,k}  & = \frac{ \sum_{k=i_{s,k-1}+1}^{i_{s,k}} E_S(k) }{ \left( i_{s,k} - i_{s,k-1}  \right)B  }, \label{search_sceIII_power31} \\
        \widetilde{P}_R^{i+1,p} & = \frac{ \sum_{k=i_{r,p-1}+1}^{i_{r,p}} E_R(k+1) }{\left( i_{r,p} - i_{r,p-1}  \right) B } . \label{search_sceIII_power32}
        \end{align}
        Then, we find $k_0>i$ such that
        \begin{align}
        \widetilde{P}_R(j+1) & < \frac{(1-h_0)\widetilde{P}_S(j)} {1+ h_0 \widetilde{P}_S(j)},~j=i,\cdots,k_0, \label{search_sceIII_k01} \\
        \widetilde{P}_R(k_0+2) &  > \frac{(1-h_0)\widetilde{P}_S(k_0+1)}{1+ h_0 \widetilde{P}_S(k_0+1)}, \label{search_sceIII_k02}
        \end{align}
        where $\widetilde{P}_S(j) = \widetilde{P}_S^{i,k}$ for $i_{s,k-1}<j \leq i_{s,k}$, and $\widetilde{P}_R(j+1) = \widetilde{P}_R^{i+1,p}$ for $i_{r,p-1}<j \leq i_{r,p}$, $k,p \geq 0$, assuming $i_{s,-1} = i_{r,-1} = i-1$.

        Suppose that $k_0$ is found in the above. Assume that the source power values from block $i$ to $k_0$ are $\widehat{P}_S^i$, and from block $k_0+1$ to the next source energy exhausting block denoted by $j_s$ are $\widehat{P}_S^i + \frac{1}{h_0} -1$, where $j_s$ is given by
        \begin{align} \label{search_sceIII_js}
        j_s = \arg \min_{k_0 +1 \leq j \leq {N}} \left\{ \frac{ \sum_{k=i}^{j} E_S(k) - (j- k_0 )(\frac{1}{h_0}-1)B  }{(j-i+1)B } \right\},
        \end{align}
        and $\widehat{P}_S^i$ is given as
        \begin{align} \label{search_sceIII_power4}
        \widehat{P}_S^i = \frac{ \sum_{k=i}^{j_s} E_S(k) - (j_s - k_0 ) (\frac{1}{h_0}-1) B } { (j_s - i +1)B }.
        \end{align}
        Then, we further check the following conditions inspired by Proposition \ref{no_buffer_increasing_profile} and Scenario III in Proposition \ref{no_buffer_threecase}:
        \begin{align}
         &\widetilde{P}_S^{i,0} \geq \widehat{P}_S^i \geq \max \left\{  \frac{ \widetilde{P}_R(k_0+1)} { 1- h_0 - h_0 \widetilde{P}_R(k_0+1) }, P_S^*(i-1) \right\}, \label{search_sceIII_power11} \\
         & \widehat{P}_S^i +  \frac{1}{h_0} - 1  \leq \frac{ \widetilde{P}_R(k_0+2)} { 1- h_0 - h_0 \widetilde{P}_R(k_0+2) }. \label{search_sceIII_power22}
        \end{align}
        If (\ref{search_sceIII_power11}) and (\ref{search_sceIII_power22}) are true, it is confirmed that Scenario III happens, and the optimal source and relay power values are given as
        \begin{align}
        P_S^*(n)& =\left\{
                   \begin{array}{ll}
                     \widehat{P}_S^i, & n=i,\cdots,k_0 \\
                     \widehat{P}_S^i + \frac{1}{h_0} -1, & n=k_0+1,\cdots, j_s
                   \end{array}
                 \right., \label{search_sceIII_power_final1} \\
        P_R^*(n+1) & = \left\{
                     \begin{array}{ll}
                       \widetilde{P}_R^{i+1,p}, & i_{r,p-1}< n \leq i_{r,p}, \\
                       & n\leq k_0 ,~p \geq 0 \\
                       \frac{(1-h_0) P_S^*(n)}{1 + h_0 P_S^*(n)}, & n=k_0+1,\cdots, j_s
                     \end{array}
                   \right.. \label{search_sceIII_power_final2}
        \end{align}
        Then, we set $i=j_s+1$, and continue the forward search.

If $k_0$ satisfying (\ref{search_sceIII_k01}) and (\ref{search_sceIII_k02}) cannot be found, or with such $k_0$ found in (\ref{search_sceIII_k01}) and (\ref{search_sceIII_k02}) but the conditions in (\ref{search_sceIII_power11}) and (\ref{search_sceIII_power22}) are not satisfied, we claim that Scenario III cannot happen and Scenario II must be true. In this case, the optimal source and relay power profiles are given as
      \begin{align}
      P_S^*(n) & = \widetilde{P}_S^{i,0},~P_R^*(n+1) = \widetilde{P}_R^{i+1,0}, \nonumber \\
       &~~~~~~~~~~~n=i,\cdots,\min(k_0,i_{s,0}), \label{search_sceII_power}
      \end{align}
      where $k_0 = \infty$ if no such $k_0$ satisfies (\ref{search_sceIII_k01}) and (\ref{search_sceIII_k02}). Then, we set $i=\min(k_0,i_{s,0})+1$, and continue the forward search.

\begin{table}[ht]
\begin{center}
\caption{Algorithm \ref{no_buffer_algorithm}: Compute the optimal solution of Problem (P1) for the case of $0 < h_0 <1$.}
\hrule
\vspace{0.3cm}
\begin{itemize}
\item Set $i=1$; while $i \leq N$, repeat
  \item Compute $i_{s,0}$ and $i_{r,0}$ by (\ref{no_buffer_posi}), and $\widetilde{P}_S^{i,0}$ and $\widetilde{P}_R^{i+1,0}$ by (\ref{no_buffer_pswide1}) and (\ref{no_buffer_pswide2}).
  \begin{enumerate}
    \item If $\widetilde{P}_R^{i+1,0}  \geq \frac{(1-h_0)\widetilde{P}_S^{i,0} }{1+ h_0 \widetilde{P}_S^{i,0} }$, compute $P_S^*(i)$ and $P_R^*(i+1)$ by (\ref{search_sceI_power}), and set $i=i_{s,0}+1$;
    \item If $\widetilde{P}_R^{i+1,0}  < \frac{(1-h_0)\widetilde{P}_S^{i,0} }{1+ h_0 \widetilde{P}_S^{i,0} }$, compute $i_{s,k}$, $i_{r,p}$, $\widetilde{P}_S^{i,k}$ and $\widetilde{P}_R^{i+1,p}$, $k,p \geq 1$ by (\ref{search_sceIII_power1})-(\ref{search_sceIII_power32}), and check
        \begin{enumerate}
          \item If there exists $k_0$ satisfying (\ref{search_sceIII_k01}) and (\ref{search_sceIII_k02}), compute $j_s$ and $\widehat{P}_S^i$ by (\ref{search_sceIII_js}) and (\ref{search_sceIII_power4}), respectively, and check
          \begin{enumerate}
            \item If $\widehat{P}_S^i$ satisfies (\ref{search_sceIII_power11}) and (\ref{search_sceIII_power22}), compute $P_S^*(i)$ and $P_R^*(i+1)$ by (\ref{search_sceIII_power_final1}) and (\ref{search_sceIII_power_final2}); set $i=j_s+1$.
            \item else compute $P_S^*(i)$ and $P_R^*(i+1)$ by (\ref{search_sceII_power}); set $i=\min\{k_0, i_{s,0}\}+1$.
          \end{enumerate}

          \item else compute $P_S^*(i)$ and $P_R^*(i+1)$ by (\ref{search_sceII_power}); set $i=i_{s,0}+1$.
        \end{enumerate}
  \end{enumerate}

  \item Algorithm ends.
\end{itemize}
\vspace{0.2cm} \hrule \label{no_buffer_algorithm} \end{center}
\end{table}

\begin{Remark}
According to the proof given in Appendix \ref{proof_no_optimal}, it follows that for the case of $0< h_0 <1$, the optimal source power solution of Problem (P1) obtained using Algorithm \ref{no_buffer_algorithm} is unique, while this is not necessarily true for the obtained optimal relay power solution according to Proposition \ref{no_buffer_cons}. However, the obtained relay power solution achieves the minimum energy consumption at the relay, since (\ref{no_buffer_cons_equa}) is satisfied for each $i$, $i=1,\cdots,N$.
\end{Remark}

\subsection{The Case Without Direct Link}

Similar to the case with direct link, we obtain the following optimal power solutions for Problem (P1) in the case of $h_0=0$:
\begin{enumerate}
  \item If $ P_R^*(i+1) \geq P_S^*(i) $,
\begin{align}\label{no_direct_1}
\left\{
  \begin{array}{l}
    P_S^*(i)  = \left( \frac{1}{4(N+1)\sum_{k=i}^{N} \mu_k} -1  \right)^+  \\
    P_R^*(i+1)  = P_S^*(i)
  \end{array}
\right. ;
\end{align}

  \item If $ P_R^*(i+1) \leq P_S^*(i) $,
\begin{align} \label{no_direct_2}
\left\{
  \begin{array}{l}
    P_S^*(i)  = P_R^*(i+1) \\
    P_R^*(i+1)  = \left( \frac{1}{4(N+1)\sum_{k=i}^{N} \lambda_k} -1  \right)^+
  \end{array}
\right. .
\end{align}
\end{enumerate}
It is worth noting that to obtain the optimal power profile with the minimum energy consumption, we set the source and relay power levels the same, while in general this is not necessary since the optimal source and relay power profiles may not be unique.

From the above solutions, it is observed that: 1) The source power $P_S^*(i)$ at the $i$-th block and relay power $P_R^*(i+1)$ at the $(i+1)$-th block should be identical, $i=1,\cdots,N$; 2) Since $\lambda_k$ and $\mu_k$ are strictly positive only when their corresponding energy constraints are satisfied with equality, it follows that the source/relay power changes value only when the harvested energy at either the source or the relay is exhausted.

Based on the above observations, Algorithm \ref{no_buffer_algorithm} for the case with direct link can be simplified to obtain the optimal source and relay power allocation for Problem (P1) in the case without direct link. We denote this algorithm as Algorithm \ref{no_buff_algo}, which is summarized in Table \ref{no_buff_algo}. Since the optimality proof of Algorithm II is similar to that of Algorithm I, it is omitted here for brevity.

\begin{table}[ht]
\begin{center}
\caption{Algorithm \ref{no_buff_algo}: Compute the optimal solution of Problem (P1)  for the case of $h_0 = 0$.}
\hrule
\vspace{0.3cm}
\begin{itemize}
\item Set $i=1$; when $i \leq N$, repeat
  \item Compute $\widetilde{P}_S^i$, $\widetilde{P}_R^{i+1}$, $i_s$, and $i_r$ as follows.
      \begin{align*}
      i_s & = \arg \min_{i \leq j \leq {N}} \left\{ \frac{\widetilde{E}_S(i) + \sum_{k=i}^{j} E_S(k)}{\left( j -i +1 \right)B }  \right\} , \\
      i_r & = \arg \min_{i \leq j \leq {N}} \left\{ \frac{\widetilde{E}_R(i+1) +\sum_{k=i}^{j} E_R(k+1)}{\left( j -i +1 \right)B  }  \right\} ,\\
      \widetilde{P}_S^i & = \frac{ \widetilde{E}_S(i) + \sum_{k=i}^{i_{s}} E_S(k) }{\left( i_{s} -i +1 \right) B  } , \\
      \widetilde{P}_R^{i+1} & = \frac{\widetilde{E}_R(i+1) + \sum_{k=i}^{i_{r}} E_R(k+1)}{ \left( i_{r} -i +1 \right) B },
      \end{align*}
      where $\widetilde{E}_S(1)= \widetilde{E}_R(2)=0$, $\widetilde{E}_S(i) = \sum_{k=1}^{i-1}E_S(k) - B P_S^*(k)$ and $\widetilde{E}_R(i+1) = \sum_{k=1}^{i-1}E_R(k+1) - B P_R^*(k+1)$, $i=2,\cdots,N$, respectively. Then, the optimal source and relay power profiles are given as
      \begin{enumerate}
        \item If $\widetilde{P}_S^i \geq \widetilde{P}_R^{i+1}$, set $P_S^*(k) = P_R^*(k+1)=\widetilde{P}_R^{i+1}$ for $k=i,\cdots,i_r$, and $i=i_r+1$;
        \item If $\widetilde{P}_S^i < \widetilde{P}_R^{i+1}$, set $P_S^*(k) = P_R^*(k+1)=\widetilde{P}_S^i$ for $k=i,\cdots,i_s$, and $i=i_s+1$.
      \end{enumerate}
  \item Algorithm ends.
\end{itemize}
\vspace{0.2cm} \hrule \label{no_buff_algo} \end{center}
\end{table}

\begin{figure}[!t]
\centering
\includegraphics[width=.9 \linewidth]{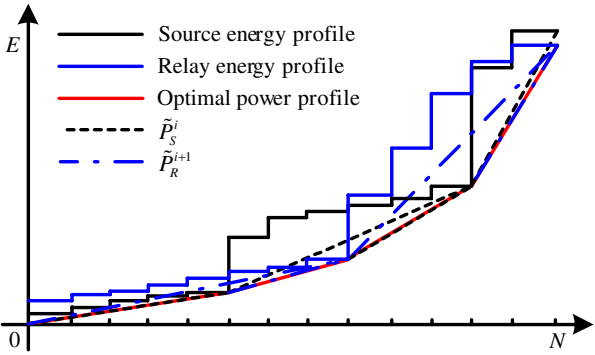}
\caption{An example for the optimal source and relay power allocation for the DC case with $h_0=0$. Note that the slope of each dashed line corresponds to $\widetilde{P}_S^i$ or $\widetilde{P}_R^{i+1}$ obtained using Algorithm \ref{no_buff_algo}.}
\label{no_direct}
\end{figure}

\begin{Remark}
The optimal solution for Problem (P1) obtained using Algorithm \ref{no_buff_algo} in the case of $h_0=0$ is illustrated in Fig. \ref{no_direct}, where we see that the optimal source/relay power profile corresponds to the shortest path from the origin to the highest possible sum-energy point at the end of the $N$-block transmission under two  stair-like source and relay energy constraints. It is worth noting that for the case of $h_0=0$, our relay channel model degrades to a cascade of two AWGN point-to-point channels that were studied in \cite{yang} with individual EH constraints. Thus, at each block, either the source-relay or the relay-destination link can be the performance bottleneck.
\end{Remark}

\section{Optimal Solution for the NDC Case}

In this section, we solve Problem (P2) for the NDC case. We first prove that a separation principle for the source and relay power allocation problem holds, upon which Problem (P2) can be solved by a two-stage strategy: First obtain the optimal source power allocation by ignoring the relay, and then optimize the relay power allocation with the obtained source power solution. Since this separation principle applies to both cases with and without the direct link, we address these two cases with a unified analysis as follows.

\subsection{Optimal Source Power Allocation}

First, we consider the following source power allocation problem by ignoring the relay:
\begin{align}
(\text{P3})&~~\max_{ P_S(i) \geq 0, \forall i} ~~ \sum_{i=1}^N \mathcal{C} \left( h P_S (i) \right)  \\
\text{s. t.}&~~ \sum_{i=1}^k P_S(i)  \leq   \frac{1}{B} \sum_{i=1}^k E_S(i),~k=1,2,\cdots,N, \label{no_constr_source_cons}
\end{align}
where $h$ is a constant with $0 < h \leq 1$. Problem (P3) has been solved in \cite{yang}, for which the algorithm (denoted by Algorithm \ref{with_source_profile}) is summarized in Table \ref{with_source_profile} for the sake of completeness. Note that the optimal source power profile $P_S^*(i)$'s of Problem (P3) are non-decreasing over $i$ \cite{yang}.

\begin{table}[ht]
\begin{center}
\caption{Algorithm \ref{with_source_profile}: Compute the optimal solution for Problem (P3).}
\hrule
\vspace{0.3cm}
\begin{itemize}
\item Initialize $i=1$; while $i \leq N$, repeat
  \item Compute
\begin{align*}
i_s = \arg \min_{i \leq j \leq {N}} \left\{ \frac{\sum_{k=i}^{j} E_S(k)}{\left( j -i +1 \right) B  }  \right\},~P_S^i = \frac{ \sum_{k=i}^{i_s} E_S(k) }{ \left( i_s-i +1 \right)B  }.
\end{align*}
The optimal source power profile is given as $P_S^*(n)= P_S^i,~n=i,\cdots,i_s$; set $i=i_s+1$.
  \item Algorithm ends.
\end{itemize}
\vspace{0.2cm} \hrule  \label{with_source_profile} \end{center}
\end{table}

It is easy to see that the optimal source power profile of Problem (P3) maximizes the average throughput of both the source-relay and source-destination links. Moreover, since for the NDC case, the relay can store the binning indices of the decoded source messages with arbitrary delay before forwarding them to the destination with best effort transmissions, the relay power profile intuitively should have no effect on the optimal source power profile. This conjecture is affirmed by the following proposition.

\begin{Proposition} \label{with_source_profile_opt}
For the NDC case, the optimal source power solution for Problem (P3) is also globally optimal for Problem (P2).
\end{Proposition}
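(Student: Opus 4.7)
The plan is to show that for any (P2)-feasible pair $(P_S,P_R)$ one can exhibit a (P2)-feasible $(P_S^*,P_R^{**})$ with at least the same objective value, where $P_S^*$ is the (P3) optimum produced by Algorithm \ref{with_source_profile}. The starting observation is that Algorithm \ref{with_source_profile} reads only the cumulative source-energy sequence $\{E_S(i)\}$, so $P_S^*$ is simultaneously the (P3) optimum for every scaling constant $h\in(0,1]$. By the standard concavity/smoothing argument underlying Algorithm \ref{with_source_profile}, $P_S^*$ therefore maximizes $\sum_{i=1}^N\varphi(P_S(i))$ over source-energy-feasible $P_S$ for every concave non-decreasing function $\varphi$ with $\varphi(0)=0$. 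Applying this with $\varphi(x)=\mathcal{C}(h_0 x)$ and with $\varphi(x)=f(x):=\mathcal{C}(x)-\mathcal{C}(h_0 x)$ (concave non-decreasing for $h_0\in[0,1)$) yields, for every source-feasible $P_S$,
\[
\sum_{i=1}^N\mathcal{C}(h_0 P_S^*(i))\ge\sum_{i=1}^N\mathcal{C}(h_0 P_S(i)),\qquad \sum_{i=1}^N f(P_S^*(i))\ge\sum_{i=1}^N f(P_S(i)).
\]

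Next, given any (P2)-feasible pair $(P_S,P_R)$, set $z_i:=\mathcal{C}(P_R(i+1))\ge 0$. The $k=N$ rate constraint from (\ref{inf_buffer_DF2}) combined with the inequality above gives $\sum_i z_i\le\sum_i f(P_S(i))\le\sum_i f(P_S^*(i))$. My plan is to construct $z_i^{**}\ge 0$ with the same total $\sum_i z_i^{**}=\sum_i z_i$, obeying the partial budgets $\sum_{i=1}^k z_i^{**}\le\sum_{i=1}^k f(P_S^*(i))$ for every $k$, and inducing a relay profile $P_R^{**}(i+1):=2^{2z_i^{**}}-1$ that meets the cumulative relay energy constraints. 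I would do this iteratively: starting from $z^{**}=z$, whenever the earliest $k$ violates the partial budget under $P_S^*$, shift a small mass $\delta>0$ of $z$-weight from block $k$ forward to block $N$. The shift preserves $\sum_i z_i^{**}$, weakly tightens partial sums only for indices $l\in[k,N-1]$, and can be repeated until every partial budget is met (which is possible because the total-budget inequality $\sum_i f(P_S^*(i))\ge\sum_i z_i$ leaves sufficient room at block $N$).

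The key remaining and main technical step is to show that the iterative shifts preserve relay energy feasibility. A direct computation using the concavity of $\mathcal{C}$ shows that the decrease $\Delta_k$ in $P_R^{**}(k+1)$ and the increase $\Delta_N$ in $P_R^{**}(N+1)$ satisfy $\Delta_N\le\Delta_k$ whenever the current $z_k$ exceeds $z_N+\delta$; by choosing $\delta$ small and processing donor blocks in order of current maximal $z$-value, one maintains this inequality throughout, so the cumulative relay-power partial sums $\sum_{i=1}^l P_R^{**}(i+1)$ only weakly decrease at each step and remain within the original caps $\sum_{i=1}^l E_R(i+1)/B$. Letting $G(P_S,P_R)$ denote the (P2) objective, combining the construction of $P_R^{**}$ with the source-side inequality of the first paragraph yields
\[
G(P_S^*,P_R^{**})=\sum_i\mathcal{C}(h_0 P_S^*(i))+\sum_i z_i^{**}\ge\sum_i\mathcal{C}(h_0 P_S(i))+\sum_i z_i=G(P_S,P_R).
\]
Since $(P_S,P_R)$ was an arbitrary (P2)-feasible pair, the pair $(P_S^*,P_R^{**})$ is globally optimal for (P2) and the proposition follows.
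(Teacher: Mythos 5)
Your opening step is fine and matches what the paper also relies on: the Algorithm~\ref{with_source_profile} profile is the universal maximizer of $\sum_i\varphi(P_S(i))$ over source-energy-feasible profiles for every concave non-decreasing $\varphi$, and both $\mathcal{C}(h_0 x)$ and $f(x)=\mathcal{C}(x)-\mathcal{C}(h_0 x)$ qualify. The genuine gap is in what you yourself flag as ``the key remaining and main technical step'': the rearrangement of the relay rates. Two things go wrong. First, to repair a violated budget $\sum_{i\le k}z_i>\sum_{i\le k}f(P_S^*(i))$ the donor block must lie in $\{1,\dots,k\}$, so you cannot ``process donor blocks in order of current maximal $z$-value'' over the whole horizon; when the first violation is at $k=1$ the donor is forced to be block $1$. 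Second, your chosen recipient is block $N$, and nothing prevents $z_N$ from already exceeding the donor's rate: the increments $f(P_S^*(i))$ are non-decreasing and can be much larger at late blocks, so a feasible competitor can legitimately carry a large $z_N$. In that situation the shift is the reverse of a Robin Hood move: by your own computation $\Delta_N\le\Delta_k$ holds only if $z_k\ge z_N+\delta$, and when it fails the cumulative relay power $\sum_{i\le N}(2^{2z_i^{**}}-1)$ strictly increases, which breaks a tight relay-energy constraint. You have not shown that a recipient with both spare rate budget and a smaller current rate always exists, so the assertion that the relay-power partial sums ``only weakly decrease at each step'' is unsupported, and with it the existence of a feasible $P_R^{**}$ attaining the same total rate.

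It is worth seeing how the paper sidesteps this entirely. It first proves (Lemma in Appendix C) that the optimal source profile of (P2) can be taken non-decreasing over $i$; for a non-decreasing competitor $P_S^{\star}$ one gets the \emph{prefix-sum} domination $\sum_{k\le j}f(P_S^{\star}(k))\le\sum_{k\le j}f(P_S^*(k))$ for every $j$ (by an induction that exploits monotonicity together with the energy-exhausting blocks of $P_S^*$), not merely the total-sum inequality you derive. With prefix domination the competitor's relay rate profile is feasible as-is under $P_S^*$ and no rearrangement of $z$ is needed. Your total-sum-only inequality is genuinely weaker -- an arbitrary feasible $P_S$ can front-load energy so that $\sum_{k\le j}f(P_S(k))>\sum_{k\le j}f(P_S^*(k))$ for small $j$ -- and that is precisely why your proof is forced into the redistribution argument that it does not complete. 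To salvage your route you would need to prove that a mass-preserving, forward, power-non-increasing rearrangement always exists (e.g., via the capped cumulative profile $\min\{\sum_{i\le m}f(P_S^*(i)),\sum_{i\le m}z_i\}$ together with a careful convexity argument), or else restrict attention to non-decreasing competitors first, which is exactly the paper's Lemma.
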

\begin{proof}
See Appendix \ref{with_source_opt}.
\end{proof}

This proposition implies that the separation principle for the source and relay power allocation problems is optimal for Problem (P2). Thus, even though Problem (P2) is non-convex, we can still find its globally optimal solution efficiently.

\begin{Remark} \label{rem_source_NDC}
It is shown in Appendix \ref{with_source_opt} that for the case of $0 < h_0 <1$, the optimal source power profile $P_S^*(i)$'s of Problem (P2) given by Algorithm \ref{with_source_profile} is unique. However, for the case of $h_0=0$, this result is not true in general, since the source energy may not necessarily be exhausted at the end of each $N$-block transmission for Problem (P2) (cf. Proposition \ref{no_buffer_source_energy}).
\end{Remark}

\subsection{Optimal Relay Power Allocation}

With the optimal source power profile $P_S^*(i)$'s obtained using Algorithm \ref{with_source_profile}, the optimal relay power profile can be determined by the following problem:
\begin{align}
(\text{P4})&~~\max_{P_R(i+1)\geq 0, \forall i} ~~  \sum_{i=1}^N \mathcal{C} \left( P_R (i+1) \right) \label{inf_buffer_relay_power1} \\
\text{s. t.} &~~ \sum_{i=1}^k \mathcal{C} \left( P_R (i+1) \right) \leq \sum_{i=1}^k \mathcal{C} \left( P_S^* (i) \right) \nonumber \\
&~~- \sum_{i=1}^k \mathcal{C} \left( h_0 P_S^* (i) \right), ~k=1,\cdots,N,~\text{and},(\ref{source_relay_energy-const2}). \label{inf_buffer_relay_power2}
\end{align}
This problem is non-convex due to the constraints in (\ref{inf_buffer_relay_power2}). However, by letting $r(i+1) =  \mathcal{C} \left( P_R (i+1) \right)$, Problem (P4) can be rewritten as
\begin{align}
(\text{P5})&~~\max_{r(i+1)\geq 0, \forall i} ~~  \sum_{i=1}^N r(i+1) \label{inf_buffer_relay_power_new1} \\
\text{s. t.} &~~ \sum_{i=1}^k r(i+1)  \leq   \sum_{i=1}^k \mathcal{C} \left( P_S^* (i) \right) - \mathcal{C} \left( h_0 P_S^* (i) \right), \label{inf_buffer_relay_power_new2} \\
&~~ \sum_{i=1}^k \left( 2^{2 r(i+1)} -1 \right) \leq \frac{1}{B}\sum_{i=1}^k E_R(i+1), \nonumber \\
&~~~~~~~~~~~~~~~~~~~~~~~~~~~~~~~~~~~~k=1,\cdots,N.  \label{inf_buffer_relay_power_new3}
\end{align}
It can be shown that Problem (P5) is convex over $r(i+1)$'s. By the KKT optimality conditions, we obtain the optimal solution for Problem (P5) as
\begin{align} \label{with_buffer_relay}
r^*(i+1) = \left( \frac{1}{2} \log \frac{1 - \sum_{k=i}^{N} \lambda_k }{2 \ln 2 \cdot \sum_{k=i}^{N} \gamma_k } \right)^+,~i=1,\cdots,N
\end{align}
where $\lambda_k$ and $\gamma_k$ are the non-negative Lagrangian multipliers corresponding to the $k$-th constraint in (\ref{inf_buffer_relay_power_new2}) and (\ref{inf_buffer_relay_power_new3}), respectively. Problem (P5) can be solved by a forward search algorithm, denoted by Algorithm \ref{inf_buffer_relay_rate} in Table \ref{inf_buffer_relay_rate}, for which the optimality proof is similar to that of Algorithm \ref{no_buff_algo}, and is thus omitted.

\begin{table}[ht]
\begin{center}
\caption{Algorithm \ref{inf_buffer_relay_rate}: Compute the optimal solution for Problem (P5).}
\hrule
\vspace{0.3cm}
\begin{itemize}
\item Initialize $i=1$; while $i \leq N$, repeat
  \item Compute
\begin{align*}
& i_1 = \arg \min_{i \leq j \leq {N}} \left\{ \frac{\widetilde{C}_{i} + \sum_{k=i}^{j}  \mathcal{C} \left( P_S^* (k) \right) -  \mathcal{C} \left( h_0 P_S^* (k) \right)}{\left( j -i +1 \right)B  }  \right\}, \\
& i_2= \arg \min_{i \leq j \leq {N}} \left\{ \frac{\widetilde{E}_{i+1} + \sum_{k=i}^{j} E_S(k+1)}{\left( j -i +1 \right) B  }  \right\}, \\
& \widetilde{r}_1 = \frac{ \widetilde{C}_{i} + \sum_{k=i}^{i_1}  \mathcal{C} \left( P_S^* (k) \right) -  \mathcal{C} \left( h_0 P_S^* (k) \right) }{ \left( i_1 - i +1 \right) B }, \\
&\widetilde{r}_2= \mathcal{C} \left(  \frac{\widetilde{E}_{i+1} + \sum_{k=i}^{i_2} E_S(k+1)}{\left( i_2 - i +1 \right) B }  \right),
\end{align*}
where $\widetilde{C}_1 = \widetilde{E}_2=0$, $\widetilde{C}_i= \sum_{k=1}^{i-1}  \mathcal{C} \left( P_S^* (k) \right) -  \mathcal{C} \left( h_0 P_S^* (k) \right) - r^*(k) $, and $\widetilde{E}_{i+1}= \sum_{k=1}^{i-1} \left( E_S(k+1) -  2^{2 r^*(k+1)} -1 \right)$, $i=2,\cdots,N$. Let $j_0 =\arg \min_{j=1,2}\left\{ \widetilde{r}_j \right\}$. Set $r^*(j+1) = \widetilde{r}_{j_0}, ~j=i,\cdots,i_{j_0}$, and $i=i_{j_0}+1$.
  \item Algorithm ends.
\end{itemize}
\vspace{0.2cm} \hrule \label{inf_buffer_relay_rate} \end{center}
\end{table}

\begin{Remark}
It is worth noting that from (\ref{with_buffer_relay}), we observe that the optimal relay transmission rate $r^*(i+1)$ is non-decreasing over $i$, and strictly increases when any one of the constraints (\ref{inf_buffer_relay_power_new2}) and (\ref{inf_buffer_relay_power_new3}) is satisfied with equality. Thus, the optimal relay power profile $P_R^*(i+1)$'s of Problem (P2) with $P_R^*(i+1)=2^{2r^*(i+1)}-1,~i=1,\cdots,N$, are also non-decreasing over $i$. Furthermore, it can be shown that the obtained $P_R^*(i+1)$'s achieve the minimum energy consumption at the relay.
\end{Remark}

Based on the analysis in the above two subsections, we obtain the following proposition for the monotonic properties for $P_S^*(i)$'s and $P_R^*(i+1)$'s for the NDC case, similar to the DC case (cf. Proposition \ref{no_buffer_increasing_profile}).

\begin{Proposition} \label{pro_mono_NDC}
For Problem (P2), there exist optimal source power solution $P_S^*(i)$'s and relay power solution $P_R^*(i+1)$'s that are non-decreasing over $i,~i=1,\cdots,N$.
\end{Proposition}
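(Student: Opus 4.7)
The plan is to invoke the separation principle (Proposition \ref{with_source_profile_opt}) and then handle the source and relay power profiles independently, since each is determined by a distinct convex subproblem whose KKT structure already exposes monotonicity.

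First I would address the source profile. By Proposition \ref{with_source_profile_opt}, an optimal source power allocation for Problem (P2) can be obtained by solving Problem (P3), whose optimal solution is computed by Algorithm \ref{with_source_profile}. As already noted in the paragraph introducing Algorithm \ref{with_source_profile} (and shown in \cite{yang}), the output $P_S^*(i) = \widetilde{P}_S^{i,0}$ is piecewise constant between consecutive source-energy-exhausting blocks and, across such blocks, is non-decreasing in $i$: indeed, if $P_S^*$ were strictly larger on an earlier interval than on the next, one could shift a small amount of energy forward without violating any cumulative constraint in (\ref{no_constr_source_cons}), strictly increasing the concave objective $\sum_i \mathcal{C}(hP_S(i))$ by Jensen's inequality, contradicting optimality. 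Hence $P_S^*(i) \leq P_S^*(i+1)$ for all $i$.

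Next I would handle the relay profile. With $P_S^*(i)$ fixed, the relay problem (P5) is convex, and its KKT solution is given in closed form by (\ref{with_buffer_relay}):
\begin{align*}
r^*(i+1) = \left( \frac{1}{2} \log \frac{1 - \sum_{k=i}^{N} \lambda_k}{2 \ln 2 \cdot \sum_{k=i}^{N} \gamma_k} \right)^+.
\end{align*}
Since $\lambda_k, \gamma_k \geq 0$, as $i$ increases by one the sum $\sum_{k=i}^N \lambda_k$ is non-increasing (so the numerator is non-decreasing) and $\sum_{k=i}^N \gamma_k$ is non-increasing (so the denominator is non-increasing, and strictly decreases whenever any active $\gamma_{i-1}>0$ is dropped). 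The argument of $\log$ is therefore non-decreasing in $i$, and so is $r^*(i+1)$ after taking the $(\cdot)^+$ projection. Since the map $r \mapsto 2^{2r}-1$ is strictly increasing, the corresponding relay powers $P_R^*(i+1) = 2^{2r^*(i+1)} - 1$ are also non-decreasing in $i$.

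Combining the two monotonicity properties yields the claim. The only delicate point I anticipate is making sure the source-side argument is airtight for the NDC case: Remark \ref{rem_source_NDC} warns that, when $h_0 = 0$, the optimal source profile of Problem (P2) is not necessarily unique, so I would phrase the conclusion as the existence of a non-decreasing optimal profile (namely the one produced by Algorithm \ref{with_source_profile}), rather than asserting that every optimal profile is non-decreasing. The rest of the argument is essentially a direct reading of the KKT expression (\ref{with_buffer_relay}) together with the monotone transformation between rate and power.
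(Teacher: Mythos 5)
Your proposal is correct and follows essentially the same route as the paper, which establishes this proposition simply by combining the two preceding subsections: the separation principle (Proposition \ref{with_source_profile_opt}), the non-decreasing source profile from Algorithm \ref{with_source_profile} as noted from \cite{yang}, and the monotonicity of $r^*(i+1)$ read off from the KKT expression (\ref{with_buffer_relay}) via the non-increasing multiplier sums. Your caveat about stating only existence of a non-decreasing profile in the $h_0=0$ case matches Remark \ref{rem_source_NDC} as well.
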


\begin{Remark}
It is worth noting that problems of similar structures as (P4) have been solved independently in prior works [3], where the results are based on generalizing the solution for the two-epoch case. Nevertheless, in this paper, we use some optimization tools to transfer Problem (P4) into Problem (P5), which is convex and much easier to be solved.
\end{Remark}

\subsection{Optimal Rate Scheduling}

It has been shown in Section IV that for the DC case, the relay binning rate $R_B(i+1)$ for each source message can be directly computed by (\ref{DC_binning}) given the optimal source and relay power profiles. However, for the NDC case, it requires additional effort to obtain $R_B(i+1)$, since we need to consider two sets of different constraints in (\ref{NDC_const2}) and (\ref{NDC_rate2}) (as will be shown next). Suppose that $R_B(i+1)$ is obtained with the optimal source and relay power profiles $P_S^*(i)$'s and $P_R^*(i+1)$'s, the source transmission rates $R(i)$'s can be determined from (\ref{NDC_rate1}). Then, with the obtained $R(i)$'s, we can update the obtained source power profile to achieve the same maximum throughput but with the minimum energy consumption.

\begin{figure}[!t]
\centering
\includegraphics[width=.9 \linewidth]{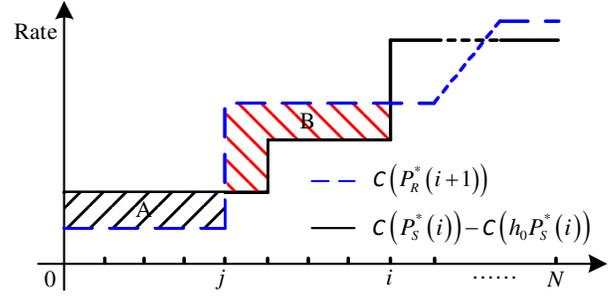}
\caption{Illustration of the rate allocation for $R_B(i+1)$'s in the NDC case.}
\label{rate_allo}
\end{figure}

To compute $R_B(i+1)$'s, the following observations are first drawn. If $\mathcal{C} (P_R^*(i+1)) > \mathcal{C} (P_S^*(i)) - \mathcal{C} ( h_0 P_S^*(i))$, $\forall i \in \{1,\cdots,N\}$, the relay should transmit not only the binning index of the $i$-th source message at the $(i+1)$-th block, but also those of source messages $1 \leq j <i$. Moreover, due to the constraint in (\ref{inf_buffer_DF2}), it follows that if $\mathcal{C} (P_R^*(i+1)) > \mathcal{C} (P_S^*(i)) - \mathcal{C} ( h_0 P_S^*(i))$, $\forall i$, there must exist $j$ with $1 \leq j < i$, such that $\mathcal{C} (P_R^*(j+1)) < \mathcal{C} (P_S^*(j)) - \mathcal{C} ( h_0 P_S^*(j))$. The above observations imply that to obtain $R_B(i+1)$'s, we need to find all $i$'s with $\mathcal{C} (P_R^*(i+1)) > \mathcal{C} (P_S^*(i)) - \mathcal{C} ( h_0 P_S^*(i))$, and then use their surplus rates to transmit the binning indices of source messages $j \leq i$. In Fig. \ref{rate_allo}, we show an example for the relationship between $\mathcal{C} (P_R^*(i+1))$ and $\mathcal{C} (P_S^*(i)) - \mathcal{C} ( h_0 P_S^*(i))$, where Case A stands for the case of $\mathcal{C} (P_R^*(i+1)) < \mathcal{C} (P_S^*(i)) - \mathcal{C} ( h_0 P_S^*(i))$ and Case B for the case with a reversed inequality. Then, a geometric interpretation for the algorithm of computing $R_B(i+1)$'s is given as follows: Use the surplus rate area in Case B to fill the deficient rate area in Case A. Obviously, if the area to be filled in Case A is larger than the filling area in Case B, the corresponding values for $R_B(i+1)$'s are not unique.

Thus, we now develop a backward search algorithm (Algorithm \ref{binning_rate_compute}) that is summarized in Table \ref{binning_rate_compute}, to obtain one of the feasible solutions for $R_B(i+1)$'s. The main procedure of this algorithm is described as follows. First, $R_B(i+1)$'s are initialized as the minimum values between $\mathcal{C} (P_R^*(i+1)) $ and $ \mathcal{C} (P_S^*(i)) - \mathcal{C} ( h_0 P_S^*(i))$ for all $i$'s, and a parameter $t$ (which, e.g., corresponds to the filling area in Case B of Fig. 3) is set to be $0$. The algorithm then searches the values for $R_B(i+1)$'s in a backward way from $i=N$ to 1. For any $i$-th block, the algorithm computes $temp = \mathcal{C} (P_R^*(i+1)) - \left( \mathcal{C} (P_S^*(i)) - \mathcal{C} ( h_0 P_S^*(i)) \right)$. Then, if $temp>0$, Case B occurs, and $temp$ is added to $t$; if $temp<0$, Case A occurs, and $R_B(i+1)$ is increased by $\min( -temp, t )$, and this amount is then subtracted from $t$.

\begin{table}[ht]
\begin{center}
\caption{Algorithm \ref{binning_rate_compute}: Compute $R_B(i+1)$'s for the NDC case.}
\hrule
\vspace{0.3cm}
\begin{itemize}
 \item Given $P_S^*(i)$'s by Algorithm \ref{with_source_profile} and $P_R^*(i+1)$'s by Algorithm \ref{inf_buffer_relay_rate}; initialize $R_B(i+1)= \min \left\{ \mathcal{C} \left( P_R^*(i+1) \right), \mathcal{C} \left( P_S^*(i) \right) - \mathcal{C} \left( h_0 P_S^*(i) \right) \right\}$, $i=1,\cdots,N$, and $t=0$;
  \item From $i=N$ to $1$, compute $temp=\mathcal{C} \left( P_R^*(i+1) \right) - \mathcal{C} (P_S^*(i)) + \mathcal{C} ( h_0 P_S^*(i))$
  \begin{enumerate}
    \item If $temp>0$, set $t = t+temp $;
    \item If $temp<0$, set $R_B(i) = R_B(i) + \min(-temp,t) $ and $t = (t+temp)^+ $. \label{binning_rate_increase}
  \end{enumerate}
  \item Algorithm ends.
\end{itemize}
\vspace{0.2cm} \hrule \label{binning_rate_compute} \end{center}
\end{table}

\begin{Remark}
It is recalled in Remark \ref{rem_source_NDC} that in the case of $h_0=0$, the optimal source power profile $P_S^*(i)$'s given by Algorithm \ref{with_source_profile} may not achieve the minimum energy consumption. In this case, from (\ref{NDC_rate1}), it follows that $R(i) = R_B(i+1)$, $i=1,\cdots,N$, with $R_B(i+1)$'s obtained by using Algorithm \ref{binning_rate_compute}. In order to achieve the minimum energy consumption at the source, the optimal source power solution of Problem (P2) can be updated as $P_S^*(i) = 2^{2R_B(i+1)}-1$. It is worth noting that the above obtained source power profile is still non-decreasing over $i$ (thus in accordance with Proposition \ref{pro_mono_NDC}), since $R_B(i+1)$'s obtained using Algorithm \ref{binning_rate_compute} are non-decreasing over $i$.
\end{Remark}

\subsection{Throughput Comparison: DC vs. NDC}

As shown by Proposition \ref{compare_two_schemes}, the throughput of the NDC case is no smaller than that of the DC case. To further compare these two cases, the following proposition shows when the NDC case is strictly better than the DC case in terms of the average throughput.

\begin{Proposition} \label{compare_two}
The average throughput of the NDC case is strictly larger than that of the DC case if and only if there exists $i \in \{1 ,\cdots, N\}$ such that $\mathcal{C} (P_R^*(i+1)) > \mathcal{C} \left( P_S^*(i) \right) - \mathcal{C} \left( h_0 P_S^*(i) \right)$, where $ P_S^*(i)$'s and $\mathcal{C} (P_R^*(i+1))$'s are the optimal solutions to Problems (P3) and (P5) obtained by Algorithms \ref{with_source_profile} and \ref{inf_buffer_relay_rate}, respectively.
\end{Proposition}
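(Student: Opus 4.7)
The plan is to exploit Proposition \ref{compare_two_schemes}: Problem (P1) has the same optimal value as Problem (P$1^*$), and (P$1^*$) and (P2) share the objective $\sum_{i=1}^N [\mathcal{C}(h_0 P_S(i)) + \mathcal{C}(P_R(i+1))]$ while differing only in that (P$1^*$) carries the strictly stronger per-block constraint $\mathcal{C}(h_0 P_S(i)) + \mathcal{C}(P_R(i+1)) \leq \mathcal{C}(P_S(i))$ in place of the cumulative version (\ref{inf_buffer_DF2}). The per-block constraints imply the cumulative ones, so the (P$1^*$) feasible set is contained in the (P2) feasible set and $T_{DC} \leq T_{NDC}$ always; the task reduces to characterizing when this inequality is strict.

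For the \emph{only if} direction (contrapositive), suppose $\mathcal{C}(P_R^*(i+1)) \leq \mathcal{C}(P_S^*(i)) - \mathcal{C}(h_0 P_S^*(i))$ for every $i$. Then $(P_S^*, P_R^*)$ itself satisfies the per-block constraint, is therefore feasible for (P$1^*$), and attains the NDC throughput; combining $T_{DC} \geq T_{NDC}$ with the reverse inequality gives $T_{DC} = T_{NDC}$.

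For the \emph{if} direction, I argue by contradiction. Assume the per-block condition fails at some $i_0$ yet $T_{DC} = T_{NDC}$, and pick $(\tilde P_S, \tilde P_R)$ attaining $T_{DC}$ in (P$1^*$). Being DC-feasible it is NDC-feasible and, by hypothesis, attains $T_{NDC}$, so it is globally optimal for (P2). By the separation principle of Proposition \ref{with_source_profile_opt} together with Remark \ref{rem_source_NDC} (strict concavity of the (P3) objective for $0<h_0<1$), $\tilde P_S$ must coincide with the unique (P3)-optimum $P_S^*$. With $\tilde P_S = P_S^*$ fixed, $\tilde P_R$ then maximizes $\sum \mathcal{C}(P_R(i+1))$ over the (P4) feasible set; transferring to the convex reformulation (P5) and invoking the pointwise KKT characterization in (\ref{with_buffer_relay}), the relay maximizer is precisely the profile $P_R^*$ produced by Algorithm \ref{inf_buffer_relay_rate}. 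Thus $(\tilde P_S, \tilde P_R) = (P_S^*, P_R^*)$ would simultaneously be DC-feasible, contradicting the violation of the per-block constraint at $i_0$.

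The main obstacle is making the uniqueness of $\tilde P_R$ rigorous. The original objective $\sum \mathcal{C}(P_R(i+1))$ is strictly concave in $P_R$, but the (P4) feasible set is non-convex because the cumulative rate constraint is concave in $P_R$; after reformulating as (P5) the objective becomes linear over a convex set, so uniqueness is not automatic. I would resolve this by combining the strict convexity of the relay EH constraint in (P5) (which, by the KKT expression (\ref{with_buffer_relay}), forces at least one relay-EH multiplier to be active on any block where the rate constraint is slack) with the non-decreasing monotonicity of $P_R^*$ guaranteed by Proposition \ref{pro_mono_NDC}, ruling out alternative NDC-optimal relay profiles. The boundary case $h_0=0$ requires separate treatment since (P3) degenerates and $P_S^*$ need not be unique, but the same per-block-versus-cumulative comparison together with the direct water-filling structure of Algorithm \ref{inf_buffer_relay_rate} suffices to compare the DC relay subproblem against $\sum \mathcal{C}(P_R^*(i+1))$ and yield the strict gap.
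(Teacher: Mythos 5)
Your ``only if'' direction is exactly the paper's: when $\mathcal{C}(P_R^*(i+1)) \leq \mathcal{C}(P_S^*(i)) - \mathcal{C}(h_0 P_S^*(i))$ holds for all $i$, the NDC-optimal profiles are feasible for (P$1^*$), and together with Proposition \ref{compare_two_schemes} the two throughputs coincide. Your ``if'' direction, however, takes a genuinely different route. The paper argues constructively: it takes the DC-optimal solution, observes via Proposition \ref{no_buffer_threecase} that only Scenario II or III can occur at the offending block, and exhibits an explicit NDC-feasible perturbation (shifting source power across the transition block, or deferring relay energy to a later block) that strictly increases the sum rate. You argue structurally: if $T_{DC}=T_{NDC}$, the DC optimum is (P2)-optimal, so by the separation principle its source profile must be $P_S^*$ and its relay profile must be a per-block-feasible (P4) optimizer, which you then want to rule out. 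This is an appealing shortcut that reuses Proposition \ref{with_source_profile_opt} and the (P5) machinery instead of redoing perturbation arguments.

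The gap is the step you flag yourself, and the ingredients you list do not close it. After the change of variables, (P5) has a linear objective over a convex set, so when the binding terminal constraint is the cumulative rate constraint $\sum_i r(i+1)\leq\sum_i D(i)$, with $D(i):=\mathcal{C}(P_S^*(i))-\mathcal{C}(h_0P_S^*(i))$, rather than the relay energy constraint, the optimal set is a nontrivial face and uniqueness is genuinely false; the monotonicity in Proposition \ref{pro_mono_NDC} concerns the particular solution returned by Algorithm \ref{inf_buffer_relay_rate} and says nothing about other optima, and the KKT formula (\ref{with_buffer_relay}) does not by itself force an energy multiplier to be active on rate-slack blocks. What you actually need is weaker than uniqueness --- that \emph{no} per-block-feasible relay profile attains the (P5) optimal value --- and establishing it requires a dichotomy at block $N$: (a) if $\sum_i r^*(i+1)=\sum_i D(i)$, the only per-block-feasible profile with that sum is $r(i+1)=D(i)$ for all $i$, and one must check (using that $D$ is non-decreasing because $P_S^*$ is and $\mathcal{C}(x)-\mathcal{C}(h_0x)$ is increasing) that its relay-energy feasibility would force Algorithm \ref{inf_buffer_relay_rate} to output $r^*=D$, i.e.\ no per-block violation in the first place; (b) if instead the total relay energy constraint is tight at the optimum, strict convexity of $2^{2r}$ applied to the midpoint of two distinct optima yields a strictly slack energy constraint and hence uniqueness, giving $\tilde P_R=P_R^*$ and the desired contradiction. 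Neither branch appears in your sketch, and the $h_0=0$ case --- where $P_S^*$ is not unique, so even the separation step needs repair --- is only gestured at. As written, the ``if'' direction is therefore incomplete, though the architecture can be made to work.
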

\begin{proof}
See Appendix \ref{proof_compare}.
\end{proof}

\section{Numerical Results}

In this section, we present some numerical results to validate our theoretical results. For the purpose of exposition, we assume a periodic energy profile model for some predictable EH sources. Specifically, the source and relay energy profiles are given as
\begin{align*}
E_S(i) & = A_S \sin \left(\frac{i-1}{N} 2\pi  + \frac{\pi}{2} \right) + A_S, \\
E_R(i+1) & = A_R \sin \left(\frac{i-1}{N} 2 \pi + \theta \right) + A_R ,~ 1 \leq i \leq N,
\end{align*}
respectively, where $A_S,A_R>0$ are the amplitudes of the sinusoidal energy profiles at the source and relay, respectively, and $\theta$ is the phase shift between these two energy profiles. Here, we choose $B=100$, $N=40$, $\theta= \frac{5}{4} \pi$, and $A_S= A_R=200$.

We compare our proposed algorithms with a greedy power allocation strategy. Here, we adopt a non-trivial greedy algorithm by assuming that both the source and relay know the harvested energy amounts up to the current block prior to transmissions. The transmission rate of the $i$-th source message in the greedy algorithm is given as
\begin{align}
R_G(i) & = \min \left\{ \mathcal{C} \left( \frac{\widetilde{E}_S(i)}{B}  \right), \right. \nonumber \\
& ~~~~~~~\left. \mathcal{C} \left( h_0 \frac{\widetilde{E}_S(i)}{B} \right) + \mathcal{C} \left(\frac{\widetilde{E}_R(i+1)}{B}  \right)  \right\},
\end{align}
where $\widetilde{E}_S(i) = \sum_{k=1}^{i} E_S(k) -   B \sum_{k=1}^{i-1} P_S(k)$, and $\widetilde{E}_R(i+1) = \sum_{k=1}^{i} E_R(k+1) -  B  \sum_{k=1}^{i-1}P_R(k+1)$, with $P_S(i) = 2^{2R_G(i)} -1$ and $P_R(i+1) = 2^{2\left( R_G(i) - \mathcal{C} \left( h_0 \frac{\widetilde{E}_S(i)}{B} \right) \right)} -1$, $i=1,\cdots,N$. Note that in the above greedy algorithm, both the source and relay consume as much available power as possible at two successive blocks to maximize the instantaneous throughput, thus achieving the minimum delay as for the DC case.

\begin{figure}[!t]
\centering
\includegraphics[width=.9 \linewidth]{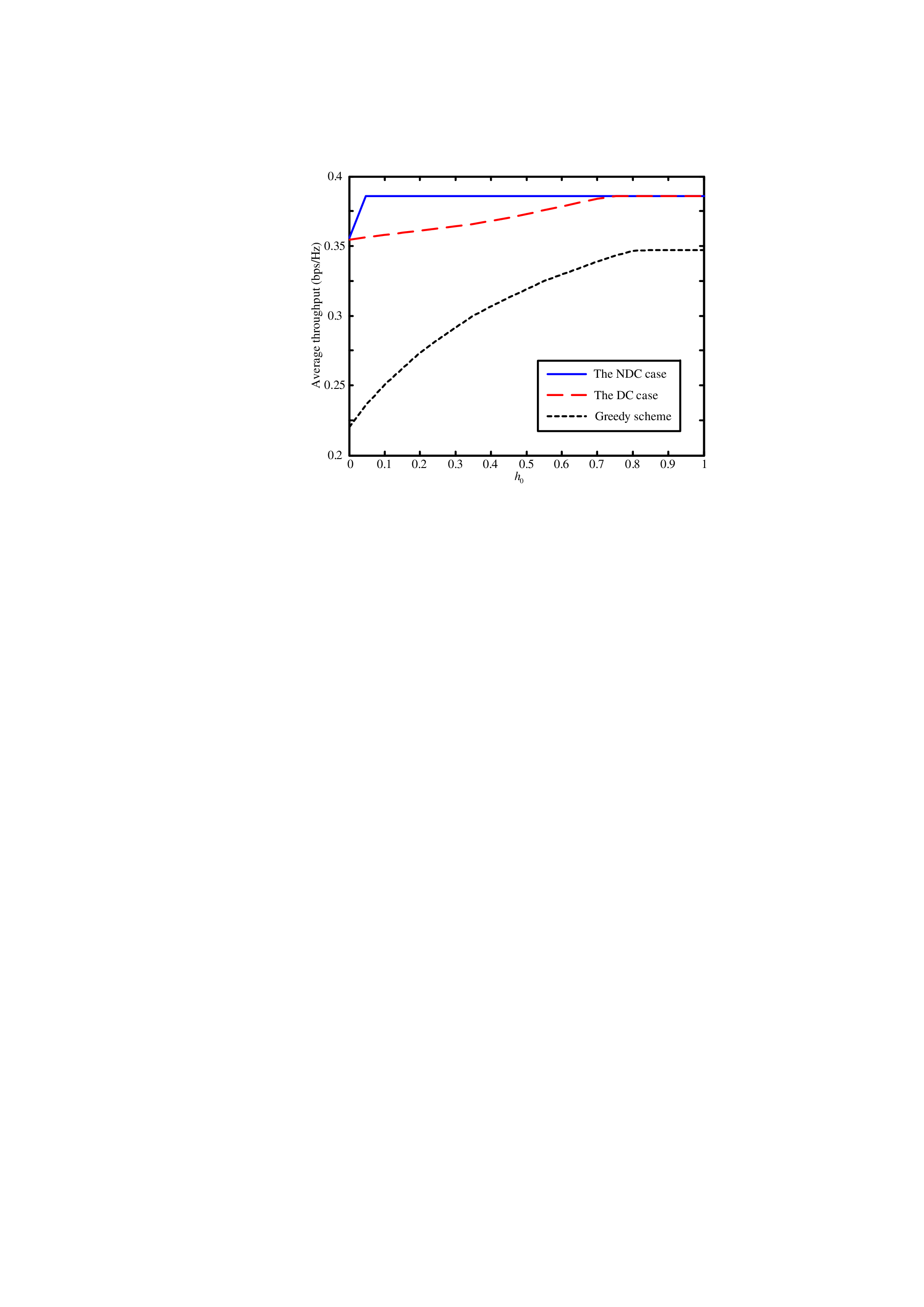}
\caption{Throughput comparison of various power allocation schemes for the relay channel with energy harvesting constraints.}
\label{diff_h0}
\end{figure}

In Fig. \ref{diff_h0}, we show the average throughputs versus the
direct link channel gain $h_0$ for the proposed power allocation
algorithms and the greedy algorithm. It is observed that as the
direct link becomes stronger, i.e., $h_0$ increases, there is a
throughput limit of 0.387 bps/Hz. For the NDC case, this throughput
limit is achieved even for very small $h_0$ around 0.05. In
contrast, for the DC case, the throughput increases almost linearly
and achieves the throughput limit when $h_0$ exceeds 0.75. Thus, the
throughput gain of NDC over DC cases by relaxing the decoding delay
is present only in the regime of small direct link gains, thanks to
the exploitation of energy diversity in cooperative communication.
Also note that with small $h_0$, the condition in Proposition
\ref{compare_two} is more likely to be satisfied. For the greedy
algorithm, it is observed that the throughput loss can be large,
especially when $h_0$ is small, as compared to the proposed
algorithm for the DC case.

\section{Concluding Remarks}

In this paper, we studied the throughput maximization problem for the orthogonal relay channel with EH source and relay nodes, assuming a deterministic EH model. For both the cases with and without delay constraints at the destination, we examined the structures of the optimal source and relay power profiles over time, and developed algorithms to efficiently compute these optimal power profiles. In addition, a new interesting energy diversity phenomenon was
explored in EH-powered wireless cooperative communication with delay
tolerance. We now conclude the paper by highlighting some important aspects unaddressed yet and thus worth being investigated in the future work as follows:
\begin{enumerate}
  \item This paper considers the deterministic EH model, while in practice, many EH sources should be modeled as random processes, e.g., the block-Markov model \cite{rui}. The study of the throughput maximization problem for the Gaussian relay channel under random EH models motivated by the results obtained here will thus be practically more appealing.
  \item To simplify the analysis, it is assumed in this paper that the energy storage capacity is infinite, which may not be true in practice \cite{yener}. Thus, considering the case with finite energy storage will be an interesting extension of this paper.
  \item In this paper, the DF relaying technique is adopted to design the optimal transmission for the orthogonal relay channel. Extensions to more general relay channel models and/or other relaying techniques, such as amplify and forward (AF) as well as CF, will be also interesting.
\end{enumerate}

\appendices

\section{Proof of Proposition \ref{no_buffer_increasing_profile}} \label{proof_no_increase_profile}

Denote the optimal source and relay power profiles of Problem (P1) as $P_S^*(i)$'s and $P_R^*(i+1)$'s, respectively. For any consecutive power pair, consider the following three cases:

1) $P_S^*(i)>P_S^*(i+1)$ and $P_R^*(i+1)>P_R^*(i+2)$: Define a new power allocation profile as $\widetilde{P}_S(i) = \widetilde{P}_S(i+1) = \frac{P_S^*(i) + P_S^*(i+1)}{2}$ and $\widetilde{P}_R(i+1) = \widetilde{P}_R(i+2) = \frac{P_R^*(i+1) + P_R^*(i+2)}{2}$. It is easy to check that the new power allocation profile still satisfies the energy constraint in (\ref{no_buffer_opt2}); moreover, since the objective function in (\ref{no_buffer_opt}) is concave, it follows that the new allocation leads to a larger sum rate over these two blocks. Thus, this case cannot happen.

2) $P_S^*(i)>P_S^*(i+1)$ and $P_R^*(i+1) \leq P_R^*(i+2)$: First, we prove that the $i$-th constraint in (\ref{no_buffer_cons_equa}) is not achieved with equality by contradiction as follows. Assuming that its equality is achieved, it is then observed that $P_R^*(i+1) \leq P_R^*(i+2)$ is contradicted with the following results:
      \begin{align}
      \log \left( 1+ P_R^*(i+1) \right) &  = \log \left( \frac{ 1 +P_S^*(i)}{ 1 + h_0 P_S^*(i)} \right) \nonumber \\
      & > \log \left( \frac{ 1 +P_S^*(i+1)}{ 1 + h_0 P_S^*(i+1)} \right) \label{no_buffer_incre_case2_equa1}\\
      & \geq   \log \left( 1+ P_R^*(i+2) \right), \label{no_buffer_incre_case2_equa2}
      \end{align}
  where (\ref{no_buffer_incre_case2_equa1}) is due to the fact that $\log \left( \frac{ 1 + x}{ 1 + h_0 x} \right)$ is strictly increasing over $x>0$ for any fixed $h_0$ with $0 \leq h_0 <1$, and (\ref{no_buffer_incre_case2_equa2}) is by Proposition \ref{no_buffer_cons}.

  Thus, there must exist $\delta$, $0 < \delta \leq \frac{P_S^*(i) - P_S^*(i+1)}{2}$, such that
  \begin{align} \label{no_buffer_incre_case2_equa3}
  \mathcal{C} \left(P_S^*(i) - \delta \right) \geq \mathcal{C} \left(h_0 \left( P_S^*(i)- \delta \right) \right) + \mathcal{C} \left( P_R^*(i+1) \right).
  \end{align}
  Define a new power allocation profile as $\widetilde{P}_S(i) = P_S^*(i) - \delta $, $\widetilde{P}_S(i+1) = P_S^*(i+1) +\delta$, $\widetilde{P}_R(i+1) = P_R^*(i+1)$, and $\widetilde{P}_R(i+2) = P_R^*(i+2)$. It is easy to check that the new allocation satisfies the energy constraints in (\ref{no_buffer_opt2}), and yields a larger sum rate over these two blocks, i.e.,
  \begin{align}
&~~ ~~ \widetilde{R}(i) + \widetilde{R}(i+1) \nonumber \\
& \geq \mathcal{C} \left(h_0 \widetilde{P}_S(i) \right) + \mathcal{C} \left( P_R^*(i+1) \right) + \mathcal{C} \left(h_0 \widetilde{P}_S(i+1) \right) \nonumber \\
&~~~~~~~~~~~~~~~~~ + \mathcal{C} \left( P_R^*(i+2) \right) \label{no_buffer_incre_case2_equa4} \\
& > \mathcal{C} \left(h_0 P_S^*(i) \right) + \mathcal{C} \left( P_R^*(i+1) \right) + \mathcal{C} \left(h_0 P_S^*(i+1) \right) \nonumber \\
&~~~~~~~~~~~~~~~~~ + \mathcal{C} \left( P_R^*(i+2) \right) \label{no_buffer_incre_case2_equa5} \\
& = R^*(i) +  R^*(i+1), \nonumber
  \end{align}
where (\ref{no_buffer_incre_case2_equa4}) is due to (\ref{no_buffer_incre_case2_equa3}) and the following fact: Since $\mathcal{C}(x + \delta)  - \mathcal{C} \left( x \right) \geq \mathcal{C} \left(h_0 (x + \delta) \right) - \mathcal{C} \left(h_0 x \right)$ for $0<h_0<1$, it follows that
\begin{align*}
& \mathcal{C} \left(\widetilde{P}_S(i+1) \right) - \left( \mathcal{C} \left(h_0 \widetilde{P}_S(i+1) \right) + \mathcal{C} \left( P_R^*(i+2) \right) \right) \\
\geq & \mathcal{C} \left(P_S^*(i+1) \right) - \mathcal{C} \left( h_0 P_S^*(i+1) \right) -  \mathcal{C} \left(  P_R^*(i+2) \right) \geq  0;
\end{align*}
and (\ref{no_buffer_incre_case2_equa5}) is due to the concavity of $\log(x)$ and $\delta \leq \frac{P_S^*(i) - P_S^*(i+1)}{2}$. Therefore, this case cannot happen.

3) $P_S^*(i) \leq P_S^*(i+1)$ and $P_R^*(i+1)>P_R^*(i+2)$: Since $\log \left( \frac{1+x}{1+h_0 x} \right)$ is strictly increasing over $x>0$ for fixed $h_0$, $0 \leq h_0 <1$, it follows that
      \begin{align*}
      \log \left( \frac{1 + P_S^*(i+1)}{1 + h_0 P_S^*(i+1)} \right) &  \geq \log \left( \frac{1 + P_S^*(i)}{1 + h_0 P_S^*(i)} \right) \\
      & \geq \log \left( 1+ P_R^*(i+1) \right),
      \end{align*}
      where the last inequality is due to Proposition \ref{no_buffer_cons}. Thus, it is obtained that
\begin{align}
\mathcal{C} \left(P_S^*(i+1) \right) & \geq \mathcal{C} \left(h_0 P_S^*(i+1) \right) + \mathcal{C} \left( P_R^*(i+1) \right) \nonumber \\
 & > \mathcal{C} \left(h_0 P_S^*(i+1) \right) + \mathcal{C} \left( \widetilde{P}_R(i+2) \right), \label{no_buffer_increasing_case5}
\end{align}
where $\widetilde{P}_R(i+1) = \widetilde{P}_R(i+2) = \frac{P_R^*(i+1) + P_R^*(i+2)}{2}$. By keeping $P_S^*(i)$ and $P_S^*(i+1)$ unchanged and updating the relay power values as $\widetilde{P}_R(i+1)$ and $ \widetilde{P}_R(i+2)$, it is observed that the relay energy constraints are still satisfied, and the sum rate is improved, i.e.,
      \begin{align}
      &~~~~\widetilde{R}(i) + \widetilde{R}(i+1) \\
      & = \mathcal{C} \left(h_0 P_S^*(i) \right) + \mathcal{C} \left( \widetilde{P}_R(i+1) \right) \nonumber \\
       &~~~~~~~~~+ \mathcal{C} \left(h_0 P_S^*(i+1) \right)  + \mathcal{C} \left( \widetilde{P}_R(i+2) \right) \label{no_buffer_increasing_case4} \\
& > \mathcal{C} \left(h_0 P_S^*(i) \right) + \mathcal{C} \left( P_R^*(i+1) \right) \nonumber \\
       &~~~~~~~~~+ \mathcal{C} \left(h_0 P_S^*(i+1) \right) + \mathcal{C} \left( P_R^*(i+2) \right) \label{no_buffer_increasing_case3} \\
& = R^*(i) +R^*(i+1), \nonumber
      \end{align}
      where (\ref{no_buffer_increasing_case4}) is due to (\ref{no_buffer_increasing_case5}) and the fact that $\mathcal{C} \left(P_S^*(i) \right) \geq \mathcal{C} \left(h_0 P_S^*(i) \right) + \mathcal{C} \left( P_R^*(i+1) \right) > \mathcal{C} \left(h_0 P_S^*(i) \right) + \mathcal{C} \left( \widetilde{P}_R(i+1) \right)$, and (\ref{no_buffer_increasing_case3}) is due to the concavity of $\log(x)$. Thus, this case cannot happen.

To summarize, since all the above three cases cannot be true, the only remaining case of $P_S^*(i) \leq P_S^*(i+1)$ and $P_R^*(i+1) \leq P_R^*(i+2)$ must be true. Proposition \ref{no_buffer_increasing_profile} is thus proved.

\section{The Optimality Proof of Algorithm \ref{no_buffer_algorithm}} \label{proof_no_optimal}

First, we prove that the source power profile $P_S^*(i)$'s obtained using Algorithm \ref{no_buffer_algorithm} are optimal for Problem (P1). Given $P_S^*(i)$'s, assume that the equalities of the source energy constraints are achieved at blocks $i_1,~i_2,\cdots,i_m=N$, while those are not achieved for the other blocks. Moreover, we assume that before block $i_s,~0 \leq s \leq m$ (define $i_0=0$), the optimal solution $P_S^{\star}(i)$'s of Problem (P1) are the same as $P_S^*(i)$'s, and their difference first appears at the $i$-th block, $i_s< i \leq i_{s+1}$. Due to this difference, the index of the next source energy exhausting block is denoted as $\widetilde{i}_{s+1}$, which may not be equal to $i_{s+1}$. Then, only three scenarios shown in Proposition \ref{no_buffer_threecase} may happen for both $P_S^*(i)$'s and $P_S^{\star}(i)$'s for the $(i_s+1)$-th to the $i_{s+1}$-th blocks and the $\widetilde{i}_{s+1}$-th block, respectively, which are discussed as follows:

\begin{figure*}[!b]
\vspace{4pt}
\hrulefill
\normalsize
\setcounter{Mytempeqncnt1}{\value{equation}}
\setcounter{equation}{70}
\begin{align}
& \left( l_1   \mathcal{C} \left( h_0 P_0\right)   + l_2 \mathcal{C} \left( P_0 + \frac{1}{h_0} - 1 \right) \right) -\left( l_1   \mathcal{C} \left( h_0 \left( P_0 + \delta \right)\right)  + l_2 \mathcal{C} \left( P_0 + \frac{1}{h_0} - 1 -\frac{l_1}{l_2} \delta \right) \right) >0, \label{no_buffer_proof_case1}
\end{align}
\setcounter{equation}{\value{Mytempeqncnt1}}
\end{figure*}

\begin{enumerate}[(I)]
  \item Scenario I happens for $P_S^*(i)$'s, and consider the following two cases:

\begin{enumerate}[(a)]
  \item If $P_S^{\star} (i)>P_S^*(i)$: By proposition \ref{no_buffer_increasing_profile}, it follows that $P_S^{\star} (j) \geq P_S^{\star} (i) > P_S^*(i) = P_S^*(j),~j=i,\cdots,i_{s+1}$, and then $B \sum_{k=1}^{i_{s+1}} P_S^{\star} (k) > B \sum_{k=1}^{i_{s+1}} P_S^*(k) = \sum_{k=1}^{i_{s+1}} E_S(k) $, which violates the source energy constraint. Thus, this case cannot occur.
  \item If $P_S^{\star} (i) < P_S^*(i)$:  For the case that Scenario I or II happens for $P_S^{\star} (i)$'s, it is easy to check that there will be no block $\widetilde{i}_{s+1} \geq i$ where the source energy is exhausted, and thus $P_S^{\star} (i)$'s violate Proposition \ref{no_buffer_source_energy}. For the case that Scenario III happens for $P_S^{\star} (i)$'s, it is observed that both the $\widetilde{k}_0$-th source and relay energy constraints in (\ref{source_relay_energy-const1}) and (\ref{source_relay_energy-const2}) are not achieved with equality, where $\widetilde{k}_0$ is the index of the source power transition block shown in Scenario III of Proposition \ref{no_buffer_threecase}. Then, there must exist $0<\theta<1$, such that the new source and relay power allocation profiles defined below still satisfy the source and relay energy constraints: $\widetilde{P}_S(\widetilde{k}_0) = \theta P_S^{\star}(\widetilde{k}_0) + ( 1-\theta) P_S^{\star}(\widetilde{k}_0+1) $, $\widetilde{P}_S(\widetilde{k}_0 + 1) = (1-\theta) P_S^{\star}(\widetilde{k}_0) + \theta P_S^{\star}(\widetilde{k}_0+1) $, $\widetilde{P}_R(\widetilde{k}_0+1) = \theta P_R^{\star}(\widetilde{k}_0+1) + ( 1-\theta) P_R^{\star}(\widetilde{k}_0+2) $, and $\widetilde{P}_R(\widetilde{k}_0 + 2) = (1-\theta) P_R^{\star}(\widetilde{k}_0+1) + \theta P_R^{\star}(\widetilde{k}_0+2) $. It is easy to check that with the new power allocation profiles, the sum rate over these two blocks is improved, since the rate function (\ref{orth_DF_rate}) is concave. Thus, this case cannot occur.
\end{enumerate}

\item Scenario III happens for $P_S^*(i)$'s, and assume that $P_S^*(j)=P_0,~j=i_s+1,\cdots,k_0$, and $P_S^*(j)=P_0 + \frac{1}{h_0} -1,~j=k_0+1,\cdots,i_{s+1}$. Consider the following two cases:

\begin{enumerate}[(a)]

\item If $P_S^{\star} (i)>P_S^*(i)$: When $i > i_{s,0}$, where $i_{s,0}$ is given by (\ref{no_buffer_posi}), it is easy to check that only Scenario III can happen for $P_S^{\star} (i)$'s from the $(i_s+1)$-th to the $\widetilde{i}_{s+1}$-th blocks. However, the source energy constraint at the $i_{s+1}$-th block is violated, since $P_S^{\star} (i)$'s are non-decreasing by Proposition \ref{no_buffer_increasing_profile}. As such, we only consider the case of $i \leq i_{s,0}$, which consists of two subcases:

(1) $P_S^*(i) < P_S^{\star} (i) < \widetilde{P}_S^{i,0}$, where $\widetilde{P}_S^{i,0}$ is given by (\ref{no_buffer_pswide1}) and (\ref{no_buffer_pswide2}): For $P_S^{\star} (i)$'s, similar to case (Ib), it follows that only Scenario III can happen, and denote $\widetilde{k}_0$ as the index of the source power transition block defined in Scenario III of Proposition \ref{no_buffer_threecase}. If $\widetilde{k}_0 \leq k_0$, by a similar argument as case (Ia), the source energy constraint at the $i_{s+1}$-th block will be violated; if $\widetilde{k}_0 > k_0$, by a similar argument of case (Ib), it follows that $P_S^{\star} (i)$'s cannot be optimal.

(2) $P_S^{\star} (i) = \widetilde{P}_S^{i,0}$: First, it is claimed that from the $(i_{s,0}+1)$-th to the $\widetilde{i}_{s+1}$-th blocks, there is no such index $\widetilde{k}_0$ corresponding to the source power transition block for Scenario III in Proposition \ref{no_buffer_threecase}. This is proved by contradiction, and consider the cases of $\widetilde{k}_0 > k_0$ and $\widetilde{k}_0 \leq k_0$ following the same argument as case (IIa1), respectively.

As such, for $P_S^{\star} (i)$'s, it is obtained that Scenario II happens from the $(i_s+1)$-th to the $k_0$-th blocks, and Scenario I happens from the $(k_0+1)$-th to the $i_{s+1}$-th blocks. Next, we prove that $P_S^{\star} (i)$'s are strictly sub-optimal over these blocks. Define a new source power profile as $\overline{P}_S(j) = \frac{\sum_{k=i_s+1}^{k_0} E_S(k)} {(k_0-i_s)B},~j=i_s+1,\cdots,k_0$ and $\overline{P}_S(j) = \frac{\sum_{k=k_0+1}^{i_{s+1}} E_S(k)} {(i_{s+1}-k_0)B}, ~j=k_0+1,\cdots,i_{s+1}$. Since both $\log (x)$ and $\log(h_0 x)$ are concave and some of the source energy constraints at the $(k_0+1)$-th to the $i_{s+1}$-th blocks may be violated by $\overline{P}_S(j)$'s, the sum rate $\sum_{j=i_s+1}^{i_{s+1}} R(i)$ given by $P_S^{\star} (i)$'s is upper-bounded by
\begin{align}
\sum_{j=i_s+1}^{i_{s+1}} R(i) & \leq \sum_{j=i_s+1}^{k_0} \mathcal{C} \left( h_0 \overline{P}_S(j)\right) + \mathcal{C} \left( \overline{P}_R (j+1)\right)  \nonumber \\
&~~~~+  \sum_{j=k_0 +1}^{i_{s+1}}  \mathcal{C} \left( \overline{P}_S(j) \right), \label{proof_new}
\end{align}
where $\overline{P}_R (j+1) = \widetilde{P}_R^{i+1,p}$, with $i_{r,p-1}<j \leq i_{r,p}$, $p \geq 0$, $i_{r,-1} = i_s$, $j \leq k_0$, and $i_{r,p}$'s and $\widetilde{P}_R^{i+1,p}$'s are given by (\ref{no_buffer_posi}), (\ref{no_buffer_pswide1}), (\ref{no_buffer_pswide2}), (\ref{search_sceIII_power2}), (\ref{search_sceIII_power31}), and (\ref{search_sceIII_power32}).

On the other hand, the sum rate for the $(i_s+1)$-th to $i_{s+1}$-th source messages with $P_S^*(i)$'s is
\begin{align}
\sum_{j=i_s+1}^{i_{s+1}} R_i^* & = \sum_{j=i_s+1}^{k_0}  \mathcal{C} \left( h_0 P_S^*(j)\right) + \mathcal{C} \left( P_R^*(j+1)\right) \nonumber \\
 & ~~~~~~~~~~+  \sum_{j=k_0 +1}^{i_{s+1}}  \mathcal{C} \left( P_S^*(j) \right), \label{proof_old}
\end{align}
where $P_R^*(j+1) = \widetilde{P}_R^{i+1,p}=\overline{P}_R (j+1)$ is due to (\ref{search_sceIII_power_final2}). To prove that (\ref{proof_old}) is larger than the right hand side of (\ref{proof_new}), it is equivalent to show that (\ref{no_buffer_proof_case1}) is true, where $\delta = \overline{P}_S(i_s+1)- P_S^*(i_s+1) = \overline{P}_S(i_s+1)- P_0 $, $l_1= k_0 - i_s$, and $l_2 = i_{s+1}- k_0$. By removing the $\log$ operations, (\ref{no_buffer_proof_case1}) can be rewritten as
        \setcounter{equation}{71}
        \begin{align} \label{no_proof_inequa}
        \frac{\left(1 + h_0 P_0 \right)^{l_1} \left(  P_0 + \frac{1}{h_0} \right)^{l_2} }{\left(1 + h_0 \left( P_0 + \delta \right) \right)^{l_1} \left( P_0 + \frac{1}{h_0} - \frac{l_1}{l_2}\delta \right)^{l_2} }>1.
        \end{align}
Let $g(x) = \left(1 + h_0 \left( P_0 + x \right) \right)^{l_1} \left( P_0 +  \frac{1}{h_0} - \frac{l_1}{l_2}x \right)^{l_2} $. Note that for $0 < x \leq \delta$, $ P_0 +  \frac{1}{h_0}  - \frac{l_1}{l_2}x >0$, and thus it follows that
        \begin{align*}
        & g'(x) = - l_1 \left(1 + h_0 \left( P_0 + x \right) \right)^{l_1-1} \\ & ~~~\cdot \left(  P_0 + \frac{1}{h_0} - \frac{l_1}{l_2}x \right)^{l_2-1} \left( \frac{l_1}{l_2} +1 \right) h_0 x <0,
        \end{align*}
which means that $g(x)$ is decreasing over $0 < x \leq \delta$, and it follows that (\ref{no_proof_inequa}) is true. Then, $P_S^{\star} (i)$'s cannot be optimal for this case.

\item If $P_S^{\star} (i) < P_S^*(i)$: First, note that if $i>i_s+1$, $P_S^{\star} (i)$'s violate Proposition \ref{no_buffer_increasing_profile}. Thus, we assume that $i=i_s+1$. Then, for $P_S^{\star}(i)$'s, if Scenario I or II happens, it is easy to check that the source energy cannot be completely consumed at the end of each $N$-block transmission, which violates Proposition \ref{no_buffer_source_energy}. Consider now the case that Scenario III happens. Denote the index of the source power transition block defined in Scenario III of Proposition \ref{no_buffer_threecase} as $\widetilde{k}_0$ and there are then two subcases: (i) If $\widetilde{k}_0 \geq k_0$, it is easy to check that this case violates Proposition \ref{no_buffer_source_energy}, since there will be no source energy exhausting blocks after the $k_0$-th block; (ii) if $\widetilde{k}_0 < k_0$, it is easy to check that the relay power constraint at the $\widetilde{k}_0$-th block is achieved with equality with a similar argument as case (Ib) under Scenario III. Moreover, it can be shown that $P_S^{\star}(j) > P_S^*(j)$, $ \widetilde{k}_0 <j \leq k_0$ (if not, there will be no source power energy exhausting block existed after the $k_0$-th block). Then, it follows that $P_R^{\star}(j+1) \geq \frac{(1-h_0) P_S^{\star}(j) } {1+ h_0 P_S^{\star}(j)} >  \frac{(1-h_0) P_S^*(j) } {1+ h_0 P_S^*(j)} \geq P_R^*(j+1) $ for $ \widetilde{k}_0 <j \leq k_0$. As such, it is observed that the the relay power constraint at the $k_0$-th block is violated. Therefore, this case cannot occur.
\end{enumerate}

\item Scenario II happens for $P_S^*(i)$'s, and consider the following two cases:

\begin{enumerate}[(a)]

\item If $P_S^{\star} (i)>P_S^*(i)$: Similar to case (Ia), it follows that $P_S^{\star}(i)$'s cannot be optimal.

\item If $P_S^{\star}(i) < P_S^*(i)$: For $P_S^{\star}(i)$'s, Scenario I and II cannot occur due to the same argument as case (Ib); for the case that Scenario III happens for $P_S^{\star}(i)$'s, it cannot happen, since by using Algorithm \ref{no_buffer_algorithm}, we cannot find such $k_0$ and $\widehat{P}_S^i$ satisfying (\ref{search_sceIII_k01}), (\ref{search_sceIII_k02}), (\ref{search_sceIII_power11}), and (\ref{search_sceIII_power22}), respectively. Then, we only need to show that (\ref{search_sceIII_k01}), (\ref{search_sceIII_k02}), (\ref{search_sceIII_power11}), and (\ref{search_sceIII_power22}) are necessary for the existence of Scenario III. By a similar argument as case (II), it can be shown that (\ref{search_sceIII_k01}) and (\ref{search_sceIII_k02}) are necessary for the existence of $k_0$. For (\ref{search_sceIII_power11}) and (\ref{search_sceIII_power22}), it is shown as follows: (i) $\widehat{P}_S^i \leq \widetilde{P}_S^{i,0}$: this is due to the source energy constraint at the $i_{s,0}$-th block; (ii) $\widehat{P}_S^i \geq P_S^*(i_s) $: this is due to Proposition \ref{no_buffer_increasing_profile}; (iii) $\widehat{P}_S^i \geq \frac{ \widetilde{P}_R(k_0+1)} { 1- h_0 - h_0 \widetilde{P}_R(k_0+1) }$, which is equivalent to that $ \widetilde{P}_R(k_0+1) \leq  \frac{(1-h_0)\widehat{P}_S^i} {1+ h_0 \widehat{P}_S^i }$: If this condition is not true, which means that the $k_0$-th source and relay energy constraints in (\ref{source_relay_energy-const1}) and (\ref{source_relay_energy-const2}) are not achieved with equality, it can be shown that this case is not optimal with the same argument of case (Ib) under Scenario III; (iv) $ \widehat{P}_S^i +  \frac{1}{h_0} - 1 \leq \frac{ \widetilde{P}_R(k_0+2) }{ 1- h_0 - h_0 \widetilde{P}_R(k_0+2)} $, which is equivalent to that $ \widetilde{P}_R(k_0+2) \geq \frac{(1-h_0) \left( \widehat{P}_S^i +  \frac{1}{h_0} - 1  \right)} {1+ h_0 \left( \widehat{P}_S^i +  \frac{1}{h_0} - 1  \right)}$: If this condition is not true, it will violate the optimality conditions for Scenario III shown in Proposition \ref{no_buffer_threecase}. Therefore, the necessity of (\ref{search_sceIII_k01}), (\ref{search_sceIII_k02}), (\ref{search_sceIII_power11}), and (\ref{search_sceIII_power22}) is proved.
\end{enumerate}

\end{enumerate}

In conclusion, $P_S^*(i)$'s are optimal for Problem (P1). Now, with the optimal $P_S^*(i)$'s, we know which scenario in Proposition \ref{no_buffer_threecase} happens for each block. It is thus easy to show that the corresponding relay power profile $P_R^*(i+1)$'s obtained using Algorithm \ref{no_buffer_algorithm} are also optimal: If $P_R^*(i+1)$'s are given by (\ref{no_buffer_water1}), they are optimal obviously; if $P_R^*(i+1)$'s are given by (\ref{no_buffer_water2}), they are also optimal by proving that the two cases $P_R^{\star}(i+1) > P_R^*(i+1)$ and $P_R^{\star}(i+1) < P_R^*(i+1)$ both cannot occur, for which the proof is similar to that in case (I) and thus omitted. Therefore, the optimality of Algorithm \ref{no_buffer_algorithm} is proved.

\section{Proof of Proposition \ref{with_source_profile_opt}}
\label{with_source_opt}

Define $f(x)= \frac{1}{2} \log \left( \frac{1+x}{1+h_0x} \right)$ over $x>0$ with fixed $0 \leq h_0 <1 $. Since $f''(x) = - \frac{1}{2} \left( 1+x \right)^{-2} + \frac{1}{2} h_0^2 \left( 1+h_0 x  \right)^{-2}$ and $\frac{h_0}{1+h_0x} < \frac{1}{1 + x}$, it follows that $f''(x)<0$ and $f(x)$ is concave. Moreover, it is easy to check that $f(x)$ is increasing over $x>0$. Then, we obtain the following lemma.
\begin{Lemma} \label{appen_lemma}
For the case of $0 < h_0 <1$, the optimal source power profile $P_S^*(i)$'s of Problem (P2) are non-decreasing over $i$; for $h_0 =0$, there exist optimal source power profile $P_S^*(i)$'s, which are non-decreasing over $i$, $i=1,\cdots,N$.
\end{Lemma}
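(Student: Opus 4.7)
My plan is a local pairwise exchange on the source profile, exploiting the two properties of $f(x) = \mathcal{C}(x) - \mathcal{C}(h_0 x)$ just established: $f$ is concave and strictly increasing, and $\mathcal{C}(h_0 \cdot)$ is itself strictly concave whenever $0 < h_0 < 1$. Let $(P_S^*, P_R^*)$ be optimal for Problem (P2) and suppose, for contradiction, there is a consecutive index $i$ with $P_S^*(i) > P_S^*(i+1)$. Consider the perturbed source profile $\widetilde{P}_S(i) = P_S^*(i) - \delta$, $\widetilde{P}_S(i+1) = P_S^*(i+1) + \delta$, with the rest of $(P_S^*, P_R^*)$ unchanged and $\delta \in (0, (P_S^*(i) - P_S^*(i+1))/2]$ to be chosen small.

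The feasibility check splits into three pieces. The source energy constraint (\ref{source_relay_energy-const1}) at $k = i$ is relaxed and unchanged elsewhere. The decode constraints (\ref{NDC_rate3}) at $k \geq i+1$ are relaxed, because the change in (RHS $-$ LHS) equals $[f(P_S^*(i) - \delta) + f(P_S^*(i+1) + \delta)] - [f(P_S^*(i)) + f(P_S^*(i+1))] \geq 0$, a standard consequence of concavity of $f$. The only delicate check is (\ref{NDC_rate3}) at $k = i$, whose slack shrinks by the positive amount $f(P_S^*(i)) - f(P_S^*(i) - \delta)$; when that constraint has strict slack at the optimum, small enough $\delta$ preserves feasibility, and when it is active I would simultaneously shift $r^*(i+1) = \mathcal{C}(P_R^*(i+1))$ down and $r^*(i+2)$ up by the same amount to restore activity at $k = i$ while keeping all downstream decode constraints feasible. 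Granted feasibility, the objective changes only via the source piece $\mathcal{C}(h_0(P_S^*(i)-\delta)) + \mathcal{C}(h_0(P_S^*(i+1)+\delta)) - \mathcal{C}(h_0 P_S^*(i)) - \mathcal{C}(h_0 P_S^*(i+1))$, which for $0 < h_0 < 1$ is strictly positive by strict concavity of $\mathcal{C}(h_0 \cdot)$; for $\delta$ small enough this dominates any $o(\delta)$ loss from the relay-side co-perturbation, contradicting optimality. Hence every optimal $P_S^*$ is non-decreasing when $0 < h_0 < 1$.

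For $h_0 = 0$ the source piece vanishes and the exchange returns an equally good feasible solution, so the previous contradiction disappears. To extract an explicitly non-decreasing optimum I would select, among the compact set of optima, one maximizing the linear tie-breaking functional $\Phi(P_S) = \sum_{i=1}^N i \, P_S(i)$. Any descending pair in such a maximizer would let the same exchange raise $\Phi$ by $\delta > 0$, contradicting maximality, so the chosen optimizer must be non-decreasing, yielding the existence claim. The main obstacle is verifying feasibility in the boundary case where constraint (\ref{NDC_rate3}) at $k = i$ is active, since the co-perturbation of $r^*$ must also respect the relay-energy constraint (\ref{source_relay_energy-const2}); handling this cleanly may require shifting the extra relay rate further ahead to the first subsequent block with residual relay-energy slack, using that $\sum_{j=1}^N r^*(j+1) \leq \sum_{j=1}^N f(P_S^*(j))$ cannot be everywhere saturated at an optimum.
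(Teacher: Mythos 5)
Your argument is correct and reaches the lemma by a genuinely different case split than the paper's. The paper (Appendix C) splits on whether the relay powers are also decreasing: when $P_R^*(i+1)\leq P_R^*(i+2)$ it shows the cumulative decode constraint at $k=i$ cannot be active (by combining the $(i-1)$-th and $(i+1)$-th constraints in (\ref{inf_buffer_DF2}) with monotonicity of $f$) and then applies the source-only exchange; when $P_R^*(i+1)>P_R^*(i+2)$ it averages the source and relay pairs jointly, appealing to convexity of the constraint set in $(P_S,r)$ coordinates. You instead split on whether the $k=i$ constraint is active, which is arguably cleaner: the slack case is immediate, and in the active case your co-perturbation keeps $\sum_j r(j+1)$ fixed, so the objective gain is exactly the strictly positive source-side gain from strict concavity of $\mathcal{C}(h_0\,\cdot)$ --- there is in fact no $o(\delta)$ relay-side loss to dominate. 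Your explicit tie-breaking functional $\Phi$ for $h_0=0$ is also more complete than the paper's treatment, which dismisses that case as ``similar'' and omits it.

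The one obstacle you flag --- relay-energy feasibility of the rate co-perturbation --- dissolves on inspection, and you do not need to search ahead for a block with residual energy slack. If the $k=i$ constraint in (\ref{NDC_rate3}) is active, subtracting the $(i-1)$-th constraint gives $r^*(i+1)\geq f(P_S^*(i))$, and subtracting the active $k=i$ constraint from the $(i+1)$-th gives $r^*(i+2)\leq f(P_S^*(i+1))<f(P_S^*(i))$, since $f$ is strictly increasing and $P_S^*(i)>P_S^*(i+1)$; hence $r^*(i+1)>r^*(i+2)$. Moving an amount $\epsilon_0=f(P_S^*(i))-f(P_S^*(i)-\delta)<\bigl(r^*(i+1)-r^*(i+2)\bigr)/2$ of rate from block $i+1$ to block $i+2$ therefore brings the two rates closer together, and by convexity of $r\mapsto 2^{2r}-1$ the cumulative relay energy consumed strictly decreases at $k=i$ and at every $k\geq i+1$ (and is unchanged for $k<i$), so (\ref{source_relay_energy-const2}) is relaxed rather than threatened. (Note that $r^*(i+1)>r^*(i+2)$ is, in contrapositive, exactly the observation the paper uses to rule out activity of the $k=i$ constraint in its non-decreasing-relay case.) With that one line added, your proof is complete.
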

\begin{proof}
Denote the optimal solution of Problem (P2) as $P_S^{\star} (i)$'s and $P_R^{\star} (i+1)$'s. For the case of $0 < h_0 <1$, consider any consecutive source/relay power pair corresponding to the $i$-th and the $(i+1)$-th source messages, $i=1,\cdots,N-1$, and the following two cases:

1) $P_S^{\star} (i) > P_S^{\star} (i+1)$, $P_R^{\star} (i+1) > P_R^{\star} (i+2)$: Define $r^{\star} (i+1) = \mathcal{C}(P_R^{\star} (i+1))$. Since $\log(x)$ is increasing, it follows that $r^{\star} (i+1) > r^{\star} (i+2)$. Then, the constraint in (\ref{inf_buffer_DF2}) is equivalent to $\sum_{k=1}^i r^{\star} (k+1) \leq \sum_{k=1}^i f(P_S^{\star}  (k)) $, which is convex. Thus, similar to case 1) of Appendix \ref{proof_no_increase_profile}, this case is not optimal for Problem (P2).

2) $P_S^{\star} (i) > P_S^{\star} (i+1)$, $P_R^{\star}  (i+1) \leq P_R^{\star}  (i+2)$: It is first proved that the constraint $\sum_{k=1}^i \mathcal{C}(P_R^{\star}  (k+1)) \leq \sum_{k=1}^i f(P_S^{\star}  (k)) $ is not satisfied with equality in this case by contradiction as follows. Suppose that the above rate inequality is satisfied with equality. From the $(i-1)$-th and the $(i+1)$-th constraints in (\ref{inf_buffer_DF2}), it follows that $ \mathcal{C}(P_R^{\star} (i+1)) \geq f(P_S^{\star} (i)) $ and $ \mathcal{C}(P_R^{\star} (i+2)) \leq f(P_S^{\star} (i+1)) $, and together with the assumption that $P_R^{\star}  (i+1) \leq P_R^{\star}(i+2)$, it follows that $  f(P_S^{\star} (i)) \leq \mathcal{C}(P_R^{\star} (i+1)) \leq \mathcal{C}(P_R^{\star} (i+2)) \leq f(P_S^{\star} (i+1)) $. Since $f(x)$ is an increasing function, it follows that $P_S^{\star}(i) \leq P_S^{\star}(i+1)$, which contradicts the assumption that $P_S^{\star} (i) > P_S^{\star} (i+1)$.

Then, there must exist $\delta>0$, with which we define a new power allocation as $\widetilde{P}_S(i) = P_S^{\star}(i) -\delta$ and $ \widetilde{P}_S(i+1) = P_S^{\star}(i+1)+\delta$, while keeping $P_R^{\star}(i+1)$ and $P_R^{\star}(i+2)$ unchanged. It is easy to check that the new power allocation still satisfies the constraints in Problem (P3) and increases the sum rate over these two blocks. Thus, case 2) is not optimal for Problem (P2).

In conclusion, for the case of $0 < h_0 <1$, only the case $P_S^{\star} (i) \leq P_S^{\star} (i+1)$ and $P_R^{\star} (i+1) \leq P_R^{\star}  (i+2)$ can be optimal. For the case of $h_0 =0$, similar argument can be applied to show the existence of such a non-decreasing optimal solution, which is omitted for brevity. Thus, this lemma is proved.
\end{proof}

Next, we prove Proposition \ref{with_source_profile_opt}. First, consider the case of $0 < h_0 <1$. For the optimal source power $P_S^{\star}(i)$'s of Problem (P2) and $P_S^*(i)$'s obtained using Algorithm \ref{with_source_profile}, we assume that $P_S^{\star} (j) = P_S^*(j)$, $j=1,\cdots,i-1$, and $P_S^{\star}(i) \neq P_S^*(i)$. Consider the following two cases:

1) $P_S^{\star} (i)>P_S^*(i)$: Similar to the proof of case (Ia) in Appendix \ref{proof_no_optimal}, it follows that this case violates the source energy constraints, since $P_S^{\star}(i)$'s are non-decreasing, and thus cannot be true.

2) $P_S^{\star} (i)<P_S^*(i)$: First, we claim the following two results, which will be proved later: (i) $P_S^{\star} (i)$'s reduce the optimal value of Problem (P2) without considering the relay power allocation; (ii) By further considering the relay power allocation and the constraint (\ref{inf_buffer_DF2}), $P_S^{\star}(i)$'s shrink the feasible set of $P_R(i+1)$'s. If (i) and (ii) are both true, it follows that $P_S^{\star} (i)<P_S^*(i)$ cannot be true.

For the proof of (i), from \cite{yang}, it follows that $P_S^*(i)$'s maximize $\sum_{k=1}^N \mathcal{C} \left(h_0 P_S (k) \right)$ subject to the constraint (\ref{no_constr_source_cons}); in other words, for the case of $P_S^{\star} (i)<P_S^*(i)$, this value will be strictly decreased. Thus, (i) is proved.

For the proof of (ii), it is equivalent to prove that $\sum_{k=1}^j f(P_S^{\star}(k)) \leq \sum_{k=1}^j f(P_S^*(k))$, $i \leq j \leq N $. When $j=i$, by the assumption that $P_S^{\star} (i)< P_S^*(i)$, it follows that $\sum_{k=1}^j f(P_S^{\star} (k)) \leq \sum_{k=1}^j f(P_S^*(k))$. For $j=i+1$, if $\sum_{k=1}^j f(P_S^{\star}  (k)) > \sum_{k=1}^j f(P_S^*(k))$, it follows that $f(P_S^{\star} (j)) > f(P_S^*(j))$ and thus $P_S^{\star} (j) > P_S^*(j)$ since $f(x)$ is monotonically increasing. For $P_S^*(i)$'s, denote the index of the next source energy exhausted block after the $j$-th block as $j_s$. It is then obtained that $P_S^{\star} (k) \geq P_S^{\star} (j) > P_S^*(j)= P_S^*(k)$, $j \leq k \leq j_s$, since by Lemma \ref{appen_lemma}, $P_S^{\star} (i)$'s are non-decreasing over $i$. Moreover, by \cite{yang}, since $f(x)$ is concave, it can be shown that with the same energy budget over the first to the $j$-th blocks, if $\sum_{k=1}^j f(P_S^{\star} (k)) > \sum_{k=1}^j f(P_S^*(k))$, it follows that $\sum_{k=1}^j P_S^{\star} (k) > \sum_{k=1}^j P_S^*(k)$. Then, it is easy to check that $P_S^{\star}(j)$'s violate the source energy constraint at the $j_s$-th block, i.e., $B \sum_{k=1}^{j_s} P_S^{\star} (k) > B \sum_{k=1}^{j_s} P_S^*(k ) = \sum_{k=1}^{j_s} E_S(k)$. Thus, it is obtained that $\sum_{k=1}^j f(P_S^{\star} (k)) \leq \sum_{k=1}^j f(P_S^*(k))$ for $j=i+1$. By using the mathematical induction method, it can be shown that $\sum_{k=1}^j f(P_S^{\star} (k)) \leq \sum_{k=1}^j f(P_S^*(k))$, $1 \leq j \leq N $, which suggests that (ii) is true.

From the above analysis, it is proved that for the case of $0 < h_0 <1$, $P_S^*(i)=P_S^{\star}(i)$, $i=1,\cdots,N$, and thus $P_S^*(i)$'s are optimal for Problem (P2). For the case of $h_0 = 0$, since there always exists one optimal source power profile that is non-decreasing over $i$ (by Lemma \ref{appen_lemma}), without loss of generality, we can assume that $P_S^{\star} (i)$'s are still non-decreasing over $i$. With the same argument as the case of $0 <h_0 <1$, it is observed that claim (ii) is still true, which means that the maximum value with $P_S^{\star}(i)$'s is not larger than that with $P_S^*(i)$'s. In conclusion, this proposition is proved.

\section{Proof of Proposition \ref{compare_two}} \label{proof_compare}

Denote $P_S^{\star} (i)$'s and $P_R^{\star} (i+1)$'s as the optimal solution for Problem (P1), $P_S^*(i)$'s and $P_R^*(i+1)$'s as the optimal solution for Problem (P2). First, we prove that the condition given in Proposition \ref{compare_two} is sufficient. Assume that $\mathcal{C}(P_R^*(k+1)) \leq \mathcal{C}(P_S^*(k)) - \mathcal{C}( h_0 P_S^*(k))$, $1 \leq k \leq i$, and $\mathcal{C}(P_R^*(i+2)) > \mathcal{C}(P_S^*(i+1)) - \mathcal{C}( h_0 P_S^*(i+1))$. Due to constraint (\ref{inf_buffer_DF2}), without loss of generality, we further assume that $\mathcal{C}(P_R^*(i+1)) < \mathcal{C}(P_S^*(i)) - \mathcal{C}( h_0 P_S^*(i))$. Then, for $P_S^{\star} (i)$'s and $P_R^{\star} (i+1)$'s, it is easy to check that only Scenario II or III given in Proposition \ref{no_buffer_threecase} can happen at the $i$-th block, which is further discussed as follows:

1) Scenario III happens for the $i$-th source message in Problem (P1), i.e., $P_S^{\star}(i)= P_0$, $P_S^{\star}(i+1) = P_0+\frac{1}{h_0} - 1$, and thus the source energy constraint at the $i$-th block is not satisfied. Then, there exists $\delta>0$, such that the newly defined source power allocation $\widetilde{P}_S(i)= P_0 +\delta$, $\widetilde{P}_S(i+1) = P_0+\frac{1}{h_0} - 1 -\delta$ satisfies the source energy constraint at the $i$-th block. Moreover, as for the NDC case, we can increase the binning rate of $i$-th source message with the amount $\mathcal{C}(P_S^{\star}(i+1)) - \mathcal{C}(\widetilde{P}_S(i+1))$ (note that this operation is possible since this amount is less than $ \mathcal{C}(\widetilde{P}_S(i)) - \mathcal{C}(P_S^{\star}(i)) $), and transmit it at the $(i+1)$-th relay transmission. With the above scheme, it is easy to check that the sum rate over these two blocks is strictly improved.

2) Scenario II happens at the $i$-th block for Problem (P1). Consider two subcases: (a) If the relay energy is not exhausted at the end of each $N$-block transmission, as for the NDC case, we can increase $R_B(i+1)$, and use the available relay energy at the $(N+1)$-th block to transmit the increased part in $R_B(i+1)$, which strictly improves the throughput of the DC case; (b) If the relay energy is exhausted, there must exists $k>i+2$, such that $P_R^{\star} (k) > P_R^{\star} (i+2)$. Thus, there exists $0 < \delta < \frac{P_R^{\star} (k) - P_R^{\star} (i+2)}{2}$. Define a new relay power allocation satisfying the relay energy constraint as $ \widetilde{P}_R(i+2) =  P_R^{\star} (i+2) + \delta$ and $ \widetilde{P}_R(k) =  P_R^{\star} (k) - \delta$. By increasing the binning rate of the $i$-th message with the amount $\mathcal{C}(\widetilde{P}_R(i+2)) - \mathcal{C}( P_R^{\star} (i+2)) $ and decreasing that of the $k$-th message with $ \mathcal{C}( P_R^{\star} (k)) - \mathcal{C}(\widetilde{P}_R(k)) $, it can be shown that the new scheme improves the sum rate of these two blocks, since $\mathcal{C}(\widetilde{P}_R(i+2)) - \mathcal{C}( P_R^{\star} (i+2)) > \mathcal{C}( P_R^{\star} (k)) - \mathcal{C}(\widetilde{P}_R(k))$. Based on 1) and 2), the ``if'' part is proved.

Next, we prove that the condition in Proposition \ref{compare_two} is also necessary. Assume that $\mathcal{C}(P_R^*(i+1)) \leq \mathcal{C}(P_S^*(i)) - \mathcal{C}( h_0 P_S^*(i))$, $\forall i \in \{1,\cdots, N\}$. It is easy to check that with the power allocation $P_S^*(i)$'s and $P_R^*(i+1)$'s of Problem (P2), the achievable rate of the $i$-th block for Problem (P1) is given as $R(i) = \min \left\{\mathcal{C} \left( P_S^*(i) \right), \mathcal{C} \left( P_R^*(i+1) \right) +\mathcal{C} \left( h_0 P_S^*(i) \right) \right\} = \mathcal{C} \left( P_R^*(i+1) \right) +\mathcal{C} \left( h_0 P_S^*(i) \right)$, which equals that of the NDC case. By searching over the whole feasible set, the throughput of the DC case will be no smaller than that of the NDC case. Together with Proposition \ref{compare_two_schemes}, it is obtained that the throughput of the two cases are identical, and thus the ``only if'' part is proved.

\end{document}